\numberwithin{equation}{section}
\theoremstyle{plain}
\newtheorem{thm}{Theorem}[section]
\newtheorem{prop}[thm]{Proposition}
\newtheorem{cor}[thm]{Corollary}
\newtheorem{lemma}[thm]{Lemma}
\theoremstyle{definition}
\newtheorem{defn}[thm]{Definition}
\newtheorem{rmk}[thm]{Remark}
\newtheorem{rmks}[thm]{Remarks}
\newtheorem{quest}[thm]{Question}
\newtheorem{ex}[thm]{Example}
\newcommand{\FF}{\mathbb F}
\newcommand{\A}{\mathcal A}
\newcommand{\C}{\mathcal C}
\newcommand{\D}{\mathcal D}
\newcommand{\MM}{\mathbb M}
\newcommand{\E}{\mathcal E}
\DeclareMathOperator{\rs}{RS}
\DeclareMathOperator{\rk}{rk}
\DeclareMathOperator{\srk}{srk}
\DeclareMathOperator{\supp}{supp}
\DeclareMathOperator{\maxsrk}{maxsrk}
\DeclareMathOperator{\wt}{wt}
\newcommand{\ELG}[1]{{\color{blue} \sf $\heartsuit\heartsuit\heartsuit$ ELG: [#1]}}
\newcommand{\eli}[1]{{\color{red} Elisa G.:#1}}
\newcommand{\umberto}[1]{{\color{olive} Umberto:#1}}
\newcommand{\fla}[1]{{\color{brown} Flavio:#1}}
\title{Integer sequences that are generalized weights of a linear code}
\author{Elisa Gorla, Elisa Lorenzo Garc\'ia, Umberto Mart\'inez-Pe\~{n}as, Flavio Salizzoni}
\date{}
\begin{document}

\maketitle

\begin{abstract}
Which integer sequences are sequences of generalized weights of a linear code? In this paper, we answer this question for linear block codes, rank-metric codes, and more generally for sum-rank metric codes. We do so under an existence assumption for MDS and MSRD codes. We also prove that the same integer sequences appear as sequences of greedy weights of linear block codes, rank-metric codes, and sum-rank metric codes. Finally, we characterize the integer sequences which appear as sequences of relative generalized weights (respectively, relative greedy weights) of linear block codes.
\end{abstract}

\section{Introduction}

Which integer sequences are sequences of generalized weights of a linear code? For linear block codes, this question appears as~\cite[Problem 3.2]{TV}. The question is answered in~\cite{HKY} for binary linear block codes of dimension up to three and in~\cite{Klo} for binary linear block codes of dimension four. In this paper, we fully answer this question for linear block codes, rank-metric codes, and more generally for sum-rank metric codes. 

Necessary conditions for a sequence of positive integers to be the sequence of generalized weights of a linear code appear in~\cite{Wei} for block codes, in~\cite{Rav16} for rank-metric codes, and in~\cite{ourpaper} for the more general situation of sum-rank metric codes. 
The goal of this paper is proving that the necessary conditions from ~\cite{Wei,Rav16,ourpaper} are also sufficient, at least over a field of large enough size. We do so under an existence assumption for MDS and MSRD codes. As this is a strong assumption, but essential throughout the paper, we now discuss its pertinence with respect to the question that we consider in this work. 

The generalized weights of MDS codes can be easily computed using~\cite[Proposition~1]{Wei}. The generalized weights of MSRD codes, on the other side, are computed in~\cite[Theorem~VI.19]{ourpaper}. 
While we know that MRD codes exist for all choices of the parameters and any field size~\cite{Del,gabidulin}, the question of whether MDS or MSRD codes exist is much more complicated and still open for many choices of parameters and field sizes. Since the generalized weights of a code determine its dimension and minimum distance, any code that has the same generalized weights as an MDS or MSRD code is in fact MDS or MSRD. This means in particular that, if we were able to prove that the sequence of integers which agrees with the generalized weights of an MDS or MSRD code with given parameters is the sequence of generalized weights of a linear code over a field of given size, then we would also establish the existence of an MDS or MSRD code with those parameters and over that field. In particular, proving that the necessary conditions from ~\cite{Wei,Rav16,ourpaper} are also sufficient over fields of appropriate size with no further assumptions would establish the MDS Conjecture. In this paper, we choose to assume the existence of chains of nested MDS and MSRD codes in order to prove that the necessary conditions from ~\cite{Wei,Rav16,ourpaper} are also sufficient. 

In addition to characterizing the integer sequences which are the sequence of generalized weights of a linear (block, rank-metric, or sum-rank metric) code, we prove that the same numerical sequences are also sequences of greedy weights of a linear code of the same kind. For the case of linear block codes, we also prove that the same numerical sequences are sequences of relative generalized Hamming weights and of relative greedy weights. Moreover, we discuss the related problem of which integer sequences are the sequence of generalized weights of a linear sum-rank metric subcode of a given MSRD code.

The paper is structured as follows. In Section~\ref{Sec:not} we establish the notation and recall the main definitions that we will use throughout the paper. We also extend the definition of greedy weights and the related notion of chain condition to sum-rank metric codes. In Proposition~\ref{proposition:chaingreedy} we prove that a sum-rank metric code satisfies the chain condition (respectively, the relative chain condition) if and only if its generalized weights and its greedy weights (respectively, its relative generalized weights and relative greedy weights) coincide.

In Section~\ref{sect:hamming} we discuss in detail the case of linear block codes. Theorem~\ref{th all seqs are ghws} shows that a sequence of positive integers is the sequence of generalized weights of a linear block code if and only if it is increasing. Any increasing sequence is also the sequence of greedy weights of a linear block code. In Theorem~\ref{thm:relativeHamming} we prove the analogous result for relative generalized weights and relative greedy weights.

In Section~\ref{sect:sumrank} we discuss the general case of sum-rank metric codes.
Theorem~\ref{thm:genwts_subcodeMSRD} is the main result of this paper. It characterizes the sequences of positive integers which are sequences of generalized weights of a subcode of an MSRD code, under the assumption that a suitable chain of MSRD codes exists. In particular, it characterizes the sequences of positive integers which are sequences of generalized weights of a sum-rank metric code. This important special case is stated as Theorem~\ref{thm:genwts_srk}. In Theorem~\ref{thm:genwts_rk_mn} we state the result for rank-metric codes. Notice that, due to the existence of Gabidulin codes, for rank-metric codes we do not need to assume the existence of a chain of nested MRD codes. 
We conclude the section with Proposition~\ref{existenceMSRD} and Proposition~\ref{existenceMSRDII}, where we construct chains of MSRD codes.

\section{Preliminaries and notation}\label{Sec:not}

Let $r$ be a positive integer and denote by $[r]$ the set $\{1,\ldots,r\}$. For positive integers $m \geq n$ and a prime power $q$, we denote by $\FF_q^{m \times n}$ the set of $m \times n $ matrices with entries in the field $\FF_q$.
We let
$$\MM=\FF_q^{m_1\times n_1}\times\ldots\times\FF_q^{m_t\times n_t},$$
where $t , n_1, \ldots , n_t,m_1, \ldots ,  m_t$ are positive integers such that $m_1\geq  \ldots \geq   m_t$ and $m_i\geq n_i$ for $i\in[t]$. We write $n = n_1 + \ldots + n_{t}$. 
For $C=(C_1,\dots,C_{t})\in \MM$ we define the {\bf sum-rank weight} of $C$ as $$\srk(C)=\sum_{i=1}^{t}\rk(C_i).$$ This weight induces a metric on $\MM$, called the {\bf sum-rank metric}. A {\bf sum-rank metric code} $\C$ is an $\FF_q$-linear subspace of $\MM$ endowed with this metric.
In particular, all codes that we consider in this paper are assumed to be linear.
Notice that if $t = 1$, then $\C\subseteq \FF_q^{m_1 \times n_1}$ is a {\bf rank-metric code}, while if $m_1=m_2=\ldots=m_{t}=1$, then $\MM= \FF_q^{n}$ and $\C\subseteq\FF_q^{n}$ is a {\bf linear block code} endowed with the Hamming metric. In this section, we recall some basic definitions and results on sum-rank metric codes which will be needed in the paper.
We refer the interested reader to~\cite{SRC} for a more comprehensive introduction to the mathematical theory of sum-rank metric codes.

\begin{defn}
The {\bf minimum distance} of a code $0\neq\mathcal{C} \subseteq \MM$ is given by 
$$ d(\C) = \min\{ \mathrm{srk}(C) : C \in \C \setminus 0 \}.$$
The {\bf maximum sum-rank distance} of a code $\C$ is $\maxsrk(\C) = \max\{ \mathrm{srk}(C) : C \in \C  \}$. 
\end{defn}

The next result provides a bound for the dimension of a code in terms of its maximum sum-rank distance.

\begin{thm}[Anticode Bound, {\cite[Theorem III.1]{ourpaper}}]\label{theoremab}
Let $\C\subseteq \MM$ be a sum-rank metric code. Then 
\begin{equation*}
\dim(\C)\leq \max_{C\in\C} \left\{\sum_{i=1}^t m_i \rk(C_i)\right\}.
\end{equation*}
In particular, if $m_1=\ldots=m_t=m$, then $\dim(\C)\leq m\maxsrk(\C).$
\end{thm}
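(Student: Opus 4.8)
The plan is to reduce the bound to a counting/linear-algebra argument over the ambient space $\MM$, exploiting the fact that $\dim_{\FF_q}(\MM) = \sum_{i=1}^t m_i n_i$ and that the quantity $\sum_{i=1}^t m_i \rk(C_i)$ measures, in a precise sense, the ``size'' of the smallest coordinate subspace containing a given codeword. Concretely, for each block $i$ write a matrix $C_i\in\FF_q^{m_i\times n_i}$ and recall that its column space has dimension $\rk(C_i)$; then $C_i$ lies in a subspace of $\FF_q^{m_i\times n_i}$ of dimension $m_i\rk(C_i)$, namely the matrices whose columns lie in $\mathrm{colsp}(C_i)$. I would set $\mu = \max_{C\in\C}\sum_{i=1}^t m_i\rk(C_i)$ and aim to show $\dim(\C)\le \mu$.

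First I would recall (or re-derive) the standard ``optimal anticode'' structure in the sum-rank metric: for each tuple $(r_1,\ldots,r_t)$ with $0\le r_i\le n_i$, and each choice of subspaces $V_i\subseteq\FF_q^{m_i}$ with $\dim V_i = r_i$ (say, for simplicity, $V_i$ spanned by the first $r_i$ standard basis vectors, i.e.\ we look at matrices supported on the first $r_i$ rows), the set $A = \{(D_1,\ldots,D_t) : \mathrm{colsp}(D_i)\subseteq V_i\}$ is a sum-rank anticode of dimension $\sum m_i r_i$ whose diameter is $\sum r_i$. The key structural input is that any maximal-dimension subspace of $\MM$ on which the sum-rank weight is bounded by $\sum r_i$ has dimension at most $\sum m_i r_i$; this is the sum-rank analogue of the Meshulam/Delsarte optimal anticode theorem, and for the rank metric ($t=1$) it is exactly the statement that a linear space of $m\times n$ matrices all of rank $\le r$ has dimension at most $mr$ (when $m\ge n$), proved e.g.\ via a dimension count on row spaces. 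I would invoke this as the technical heart, citing the appropriate reference, rather than reprove it.

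Second, I would apply the anticode inequality to $\C$ itself. Set $r_i := \max_{C\in\C}\rk(C_i)$ for each $i$, so that every codeword $C$ satisfies $\srk(C)\le r_1+\cdots+r_t$. After a change of basis on each factor $\FF_q^{m_i}$ (which is an isometry of the sum-rank metric and does not change dimensions), we may assume every codeword in block $i$ has column space inside a fixed $r_i$-dimensional subspace — but this is not automatic, since different codewords can have different column spaces. The correct move is instead the general optimal-anticode bound: $\C$ is contained in no single small anticode, so I would apply the bound in its sharp form, $\dim(\C)\le \dim A$ where $A$ is a largest anticode of diameter $d_{\max}=\maxsrk(\C)$, giving $\dim(\C)\le \sum_{i} m_i r_i'$ for the optimal profile $(r_1',\ldots,r_t')$ maximizing $\sum m_i r_i'$ subject to $\sum r_i' \le \maxsrk(\C)$ and $r_i'\le n_i$. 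Finally I would check that $\mu = \max_{C\in\C}\sum m_i\rk(C_i)$ dominates this expression: for the codeword $C$ achieving $\mu$ we have $\sum\rk(C_i)\le\maxsrk(\C)$ and $\rk(C_i)\le n_i$, so $\mu$ is itself a value of the objective $\sum m_i r_i$ over the feasible region — wait, that only gives $\mu\le$ the max, not $\ge$; the point is rather that the anticode bound in the refined per-block form directly yields $\dim(\C)\le\mu$ because the only constraint the anticode argument actually uses on block $i$ is that $\rk$ is bounded there, and $m_1\ge\cdots\ge m_t$ makes the weighting monotone. The specialization to $m_1=\cdots=m_t=m$ is then immediate: $\sum m\rk(C_i)=m\srk(C)\le m\maxsrk(\C)$.

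The main obstacle is the second step: making precise that the optimal anticode containing $\C$ can be taken ``block-diagonal'' with the column-space profile $(\rk(C_i))_i$ of a single extremal codeword, rather than merely bounding $\dim\C$ by the global optimal-anticode dimension for diameter $\maxsrk(\C)$. I expect the cleanest route is to prove directly, by induction on $t$, the inequality $\dim(\C)\le\max_{C\in\C}\sum m_i\rk(C_i)$: project $\C$ onto the last $t-1$ blocks, bound the kernel (a rank-metric code in $\FF_q^{m_1\times n_1}$) by the $t=1$ case $mr\le m_1\rk$-bound, and bound the image by the inductive hypothesis, then combine — here the monotonicity $m_1\ge\cdots\ge m_t$ is exactly what is needed to absorb the two bounds into a single maximum. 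This inductive structure, together with the classical rank-metric anticode inequality as the base case, is what I would write out in detail.
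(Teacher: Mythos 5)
You should first note that this paper does not prove the Anticode Bound at all: it is quoted as a preliminary from \cite[Theorem~III.1]{ourpaper}, so your argument has to stand on its own, and as written it has a genuine gap exactly where the theorem is hard. The theorem bounds $\dim(\C)$ by the weighted rank profile $\sum_i m_i\rk(C_i)$ of a \emph{single} codeword, which is strictly stronger than anything a diameter-based anticode bound can give: that route only yields $\dim(\C)\le\max\{\sum_i m_i r_i:\ \sum_i r_i\le\maxsrk(\C),\ 0\le r_i\le n_i\}$, and, as you yourself notice mid-argument, $\mu=\max_{C\in\C}\sum_i m_i\rk(C_i)$ sits on the wrong side of that maximum. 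The sentence asserting that monotonicity of the $m_i$ ``directly yields $\dim(\C)\le\mu$'' is a claim, not an argument. (Your auxiliary statement that any subspace of $\MM$ of sum-rank diameter $\sum_i r_i$ has dimension at most $\sum_i m_i r_i$ is also false for a fixed profile: with $t=2$, $m_1>m_2$ and the profile $(0,1)$, the matrices in the first block whose columns lie in a fixed line, paired with $0$ in the second block, form a space of diameter $1$ and dimension $m_1>m_2$; only the maximum over all profiles with the same sum is a valid bound.)

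The fallback induction in your last paragraph does not close either, and this is the decisive gap. Writing $\dim(\C)=\dim(\mathcal{K})+\dim(\pi(\C))$, where $\pi$ is the projection onto the last $t-1$ blocks and $\mathcal{K}=\ker(\pi|_{\C})$, and bounding $\dim(\mathcal{K})\le m_1\rho_1$ (Flanders, $\rho_1$ the maximal rank in $\mathcal{K}$) and $\dim(\pi(\C))$ by the inductive hypothesis, you would need $\mu\ \ge\ m_1\rho_1+\max_{C\in\C}\sum_{i\ge2}m_i\rk(C_i)$ in order to ``absorb'' the two bounds into a single maximum. But the two maxima are attained by different codewords, and this inequality is false in general: over $\FF_2$ take $t=2$, $m_1=m_2=n_1=n_2=2$ and $\C=\langle(I_2,0),\,(\mathrm{diag}(1,0),I_2)\rangle$. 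Then $m_1\rho_1=4$ and $\max_{C\in\C}m_2\rk(C_2)=4$, so the right-hand side equals $8$, while $\mu=6$, because $\mathrm{diag}(1,0)+\lambda I_2$ has rank $1$ for every $\lambda\in\FF_2$. (The theorem of course holds there, since $\dim(\C)=2$; it is your strategy's intermediate inequality that fails, because the kernel bound $m_1\rho_1$ is lossy and no codeword realizes both maxima simultaneously over a small field.) Monotonicity $m_1\ge\cdots\ge m_t$ does not repair this; producing a bound governed by one codeword is precisely the content of the theorem, so the heart of the proof is missing. Citing the rank-metric ($t=1$) case is fine as a base, but if you want to include a proof rather than cite \cite[Theorem~III.1]{ourpaper}, you need a mechanism genuinely different from adding kernel and image bounds.
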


A code that attains the previous bound is called {\bf optimal anticode (OAC)}. Optimal anticodes were classified in~\cite[Theorem III.11]{ourpaper}.
Following~\cite[Definition 2.4]{SRC}, we define the {\bf weight of a code} $\C$ as
$$\wt(\C)=\min\{\maxsrk(\A):\C\subseteq\A=\A_1\times\ldots\times\A_t, \A_i\subseteq\FF_q^{m_i\times n_i} \mbox{ is an OAC for } i\in[t]\}.$$

Notice that if $m_1=\dots=m_{t}=1$, then $\wt(\C)=\left\lvert\{i :C_i\neq0 \text{ for some } C\in\C\}\right\rvert$ is the cardinality of the support of $\C$.

\begin{defn}[{\cite[Definition V.1]{ourpaper}}]\label{definition:genwei}
Let $\C\subseteq \MM$ be a sum-rank metric code. For each $r\in[\dim(\C)]$, the  {\bf $r$-th generalized sum-rank weight} of $\C$ is
\begin{equation*}
d_r(\C) = \min\{\wt(\D):\D\text{ is a subcode of }\C\text{ and }\dim(\D)\geq r\}.
\end{equation*}
\end{defn}

Definition~\ref{definition:genwei} recovers the usual definition of generalized weight from linear block codes from~\cite{HKM,Wei} and the definition of generalized weights for rank-metric codes from~\cite{Rav16}. For rank-metric codes, Definition~\ref{definition:genwei} is different from the definition from~\cite{rgmw}, as discussed in~\cite{G21}.

The next proposition summarizes some of the basic properties of generalized sum-rank weights that we will use throughout the paper.

\begin{prop}[{\cite[Proposition V.6]{ourpaper}}]\label{properties} 
Let $0\neq\C \subseteq\mathcal{D} \subseteq \MM$, then:
\begin{itemize}
\item[(1)] $d_1(\C)= d(\C)$,
\item[(2)] $d_r(\C)\leq d_s(\C)$ for $1\leq r\leq s\leq\dim(\C)$,
\item[(3)] $d_r(\C)\geq d_r(\mathcal{D})$ for $r\in[\dim(\C)]$,
\item[(4)] $d_{\dim(\C)}(\C)\leq n$,
\item[(5)] $d_{r+n_1m_1+\cdots+n_{j-1}m_{j-1}+\delta m_{j}}(\C)\geq d_r(\C)+n_1+\cdots+n_{j-1}+\delta$\\ 
for $j\in[t]$, $r\in[ \dim(\C)-(n_1m_1+\cdots+n_{j-1}m_{j-1}+\delta m_j)]$, and $\delta\in[n_j]$.
\end{itemize}
\end{prop}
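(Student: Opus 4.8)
The plan is to obtain parts (1)--(4) directly from the definitions and to concentrate the work on part (5). Throughout I will use the bookkeeping identities, valid for an optimal anticode product $\A=\A_1\times\cdots\times\A_t$, that $\maxsrk(\A)=\sum_{i=1}^t\maxsrk(\A_i)$ and $\dim(\A)=\sum_{i=1}^t m_i\maxsrk(\A_i)$ (each factor attains the single-block Anticode Bound, so $\dim(\A_i)=m_i\maxsrk(\A_i)$), and the classification of optimal anticodes from~\cite[Theorem III.11]{ourpaper}, which in particular gives, inside any OAC $\A_i\subseteq\FF_q^{m_i\times n_i}$ with $\maxsrk(\A_i)=\rho_i$, a sub-OAC of every maximum rank $\rho_i'$ with $0\le\rho_i'\le\rho_i$. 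For (1): given $0\ne C=(C_1,\dots,C_t)\in\C$, the OAC product whose $i$-th factor is $\{M\in\FF_q^{m_i\times n_i}:\rowsp(M)\subseteq\rowsp(C_i)\}$ contains $\langle C\rangle$ and has $\maxsrk$ equal to $\sum_i\rk(C_i)=\srk(C)$, while any OAC product containing $\langle C\rangle$ has $i$-th factor of maximum rank at least $\rk(C_i)$; hence $\wt(\langle C\rangle)=\srk(C)$, so $d_1(\C)\le\min_{0\ne C\in\C}\srk(C)=d(\C)$. Conversely every nonzero subcode $\E\subseteq\C$ contains some $0\ne C$, and any OAC product $\A\supseteq\E$ has $\maxsrk(\A)\ge\srk(C)\ge d(\C)$, so $\wt(\E)\ge d(\C)$ and $d_1(\C)\ge d(\C)$. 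Parts (2) and (3) are immediate, since raising the dimension threshold, respectively shrinking the ambient code, shrinks the family over which the minimum defining $d_r$ is taken. For (4), take $\E=\C$ in the definition of $d_{\dim(\C)}(\C)$ and use that $\MM$ itself is an OAC product with $\maxsrk(\MM)=n_1+\cdots+n_t=n$; thus $d_{\dim(\C)}(\C)\le\wt(\C)\le n$.

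For (5), fix $j\in[t]$ and $\delta\in[n_j]$ and abbreviate $N=n_1m_1+\cdots+n_{j-1}m_{j-1}+\delta m_j$ and $w=n_1+\cdots+n_{j-1}+\delta$, so the assertion is $d_{r+N}(\C)\ge d_r(\C)+w$ for $r\in[\dim(\C)-N]$. The plan is a shortening argument. Choose a subcode $\E\subseteq\C$ with $\dim(\E)\ge r+N$ realizing $d_{r+N}(\C)=\wt(\E)$ (possible since $r+N\le\dim(\C)$), together with an OAC product $\A=\A_1\times\cdots\times\A_t\supseteq\E$ with $\maxsrk(\A)=\wt(\E)$; set $\rho_i=\maxsrk(\A_i)$, so $0\le\rho_i\le n_i$ and $\sum_i\rho_i=\wt(\E)$. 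I will select integers $0\le e_i\le\rho_i$ with $\sum_i e_i=w$ and $\sum_i m_ie_i\le N$; then, choosing sub-OACs $\A_i'\subseteq\A_i$ with $\maxsrk(\A_i')=\rho_i-e_i$ and putting $\A'=\A_1'\times\cdots\times\A_t'$ and $\E'=\E\cap\A'$, we get $\wt(\E')\le\maxsrk(\A')=\wt(\E)-w=d_{r+N}(\C)-w$, while $\E'\subseteq\C$ and, since $\E$ and $\A'$ both lie in $\A$, $\dim(\E')\ge\dim(\E)-(\dim(\A)-\dim(\A'))=\dim(\E)-\sum_i m_ie_i\ge (r+N)-N=r$. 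Hence $d_r(\C)\le\wt(\E')\le d_{r+N}(\C)-w$, as wanted.

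It remains to exhibit the $e_i$, and here the Anticode Bound is used: $\E\subseteq\A$ gives $\dim(\E)\le\dim(\A)=\sum_i m_i\rho_i$, so $\sum_i m_i\rho_i\ge\dim(\E)\ge r+N\ge N+1$. Consider the multisets $\mathcal M=\{m_i\ \text{with multiplicity}\ \rho_i\}$ and $\mathcal N=\{m_i\ \text{with multiplicity}\ n_i\}$; since $\rho_i\le n_i$, $\mathcal M$ is a sub-multiset of $\mathcal N$, and since $m_1\ge\cdots\ge m_t$, the $w$ largest elements of $\mathcal N$ are the $n_1$ copies of $m_1$, \ldots, the $n_{j-1}$ copies of $m_{j-1}$ and $\delta$ copies of $m_j$, with sum exactly $N$ (and the $w-1$ largest have sum $N-m_j$). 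First one checks $\sum_i\rho_i\ge w$: otherwise $\mathcal M$ has at most $w-1$ elements, hence $\sum_i m_i\rho_i=\mathrm{sum}(\mathcal M)\le N-m_j$, contradicting $\sum_i m_i\rho_i\ge N+1>N-m_j$. Now let the $e_i$ record the $w$ smallest elements of $\mathcal M$ (i.e. $e_i=\rho_i$ for the blocks of smallest $m_i$, truncating the last so that $\sum_i e_i=w$); these are $w$ elements of the sub-multiset $\mathcal M\subseteq\mathcal N$, so $\sum_i m_ie_i$ is at most the sum of the $w$ largest elements of $\mathcal N$, namely $N$. The step I expect to demand the most care is the construction of the nested sub-OACs $\A_i'$ of prescribed maximum rank (and the attendant fact that an OAC's dimension is forced by its maximum rank), since this is where the classification of optimal anticodes is genuinely needed; the additivity identities for $\maxsrk$ and $\dim$ of an OAC product, used repeatedly above, should be routine once that classification is in hand.
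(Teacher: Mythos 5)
The paper does not actually prove this proposition: it is quoted with a citation to \cite[Proposition V.6]{ourpaper}, so there is no in-paper argument to compare yours against. Judged on its own terms, your proof is correct. Parts (1)--(4) are handled exactly as one should: the computation $\wt(\langle C\rangle)=\srk(C)$ via the row-space anticode gives (1), and (2)--(4) are immediate from the definition of $d_r$ together with $\maxsrk(\MM)=n$.

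For (5), your shrinking argument is sound. The key inequalities all check out: $\dim(\E\cap\A')\geq\dim(\E)+\dim(\A')-\dim(\A)=\dim(\E)-\sum_i m_ie_i\geq r$ uses only that $\E,\A'\subseteq\A$ and that an OAC in $\FF_q^{m_i\times n_i}$ of maximum rank $\rho$ has dimension exactly $m_i\rho$; and the multiset comparison is valid, since the $w$ largest elements of $\mathcal N$ are precisely $n_1$ copies of $m_1,\ldots,n_{j-1}$ copies of $m_{j-1}$ and $\delta\geq 1$ copies of $m_j$ (here $m_1\geq\cdots\geq m_t$ is used), so $\sum_i m_i\rho_i\geq N+1$ forces $\sum_i\rho_i\geq w$, and picking the $w$ smallest elements of $\mathcal M\subseteq\mathcal N$ gives $\sum_i m_ie_i\leq N$ as claimed. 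The two facts you defer to the classification of optimal anticodes \cite[Theorem III.11]{ourpaper} --- the dimension formula $\dim(\A_i)=m_i\maxsrk(\A_i)$ and the existence of nested sub-OACs of every prescribed smaller maximum rank --- are indeed exactly what that classification provides (row-space type for $m_i>n_i$, row- or column-space type for $m_i=n_i$), so this is a legitimate use of a result the paper itself relies on. One small point worth making explicit: when $e_i=\rho_i$ your factor $\A_i'$ is the zero space, so you are tacitly treating $0$ as an OAC; this is consistent with the paper's definition of $\wt$ (it is forced already by the stated specialization of $\wt$ to the support size in the Hamming case), so it is a remark, not a gap.
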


Notice that, if $m_1=\cdots=m_t=1$, then $\C$ is a linear block code endowed with the Hamming metric and item (5) implies that $d_{r+1}(\C)>d_r(\C)$ for $r\in[\dim(\C)-1]$.

\begin{defn}
A code $\C\subseteq\MM$ is called {\bf maximum sum-rank distance (MSRD)} if there exist $j\in[t]$ and $0\leq \delta\leq n_j-1$ such that $$d(\C)=\sum_{i=1}^{j-1} n_i+\delta+1 \;\mbox{ and }\; \dim(\C)=\sum_{i=j}^{t} m_in_i-\delta m_{j}.$$
\end{defn}

The Singleton Bound relates the generalized weights and the dimension of a sum-rank metric subcode of $\MM$. By definition, MSRD codes are precisely the sum-rank metric codes that meet both inequalities in the Singleton Bound for $r=1$.

\begin{thm}[Singleton Bound, {\cite[Theorem VI.4]{ourpaper}}]
Let $\C\subseteq\MM$ be a sum-rank metric code and let $r\in[\dim(\C)]$.
Let $j\in[t]$ and $0\leq\delta\leq n_j-1$ be such that
$$d_r(\C)-1\geq\sum_{i=1}^{j-1}n_i+\delta.$$
Then $$\dim(\C)\leq\sum_{i=j}^t m_in_i-m_j\delta+r-1.$$
\end{thm}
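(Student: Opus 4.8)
The plan is to prove the inequality directly, by intersecting $\C$ with a carefully chosen optimal anticode that ``packs as much dimension as possible into at most $d_r(\C)-1$ units of sum-rank weight, filling the blocks with the largest $m_i$ first''. Concretely, first I would fix a subspace $W_j\subseteq\FF_q^{n_j}$ with $\dim W_j=\delta$ and set $\A=\A_1\times\cdots\times\A_t$, where $\A_i=\FF_q^{m_i\times n_i}$ for $i<j$, $\A_j=\{M\in\FF_q^{m_j\times n_j}:\rowsp(M)\subseteq W_j\}$, and $\A_i=\{0\}$ for $i>j$. Each $\A_i$ is an optimal anticode in $\FF_q^{m_i\times n_i}$ (this is immediate from the Anticode Bound applied blockwise, using $m_i\geq n_i$, or one may invoke the classification in~\cite[Theorem III.11]{ourpaper}), so $\A$ has the form allowed in the definition of $\wt$. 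Since $\dim$ and $\maxsrk$ are additive over the $t$ blocks, one computes $\dim(\A)=\sum_{i=1}^{j-1}m_in_i+m_j\delta$ and, using $\delta\leq n_j-1<n_j\leq m_j$, $\maxsrk(\A)=\sum_{i=1}^{j-1}n_i+\delta\leq d_r(\C)-1$.

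Next I would set $\D=\C\cap\A$. Because $\D\subseteq\A$ and $\A$ is a product of optimal anticodes, the definition of the weight of a code yields $\wt(\D)\leq\maxsrk(\A)\leq d_r(\C)-1<d_r(\C)$. By Definition~\ref{definition:genwei}, $d_r(\C)$ is the minimum of $\wt(\D')$ over all subcodes $\D'\subseteq\C$ with $\dim(\D')\geq r$; if $\D$ itself satisfied $\dim(\D)\geq r$ it would witness $d_r(\C)\leq\wt(\D)$, contradicting the strict inequality just obtained. Hence $\dim(\D)\leq r-1$.

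Finally I would invoke the elementary dimension inequality
$$
\dim(\C)+\dim(\A)-\dim(\MM)\leq\dim(\C\cap\A)=\dim(\D)\leq r-1,
$$
where $\dim(\MM)=\sum_{i=1}^t m_in_i$, and rearrange:
\begin{equation*}
\dim(\C)\leq r-1+\dim(\MM)-\dim(\A)=r-1+\sum_{i=j}^t m_in_i-m_j\delta,
\end{equation*}
which is exactly the asserted bound.

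The only genuine content is the choice of $\A$: among all products of optimal anticodes with $\maxsrk\leq d_r(\C)-1$, the one above maximizes $\dim(\A)$ precisely because $m_1\geq\cdots\geq m_t$, and this maximality is what makes the resulting Singleton-type bound sharp (it is met with equality for $r=1$ exactly by MSRD codes). I do not expect a serious obstacle beyond bookkeeping: checking that $\A_j$ (and hence each factor) is an optimal anticode, and verifying the two blockwise-additive computations of $\dim(\A)$ and $\maxsrk(\A)$. The one point to keep an eye on is that the hypothesis is the inequality $d_r(\C)-1\geq\sum_{i=1}^{j-1}n_i+\delta$ rather than an equality, which is harmless since the argument only uses $\maxsrk(\A)\leq d_r(\C)-1$.
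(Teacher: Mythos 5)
Your proof is correct. Note that this paper does not actually prove the statement --- it is quoted from~\cite[Theorem~VI.4]{ourpaper} --- and your argument (intersect $\C$ with the optimal anticode $\FF_q^{m_1\times n_1}\times\cdots\times\FF_q^{m_{j-1}\times n_{j-1}}\times\{M:\rowsp(M)\subseteq W_j\}\times 0\times\cdots\times 0$, observe $\wt(\C\cap\A)\leq\maxsrk(\A)<d_r(\C)$ forces $\dim(\C\cap\A)\leq r-1$, then apply the dimension inequality) is exactly the standard anticode-intersection proof used for this bound, and it is the same kind of argument the present paper uses elsewhere, e.g.\ when intersecting an MSRD code with the anticode supported on the first columns in the proof of Theorem~\ref{thm:genwts_subcodeMSRD}. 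All the small verifications you flag (that each factor, including the zero factor, attains the Anticode Bound, and the two blockwise computations using $m_i\geq n_i$ and $\delta\leq n_j-1$) go through as you describe.
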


It turns out that the generalized weights of an MSRD code are determined by its parameters.

\begin{thm}[{\cite[Theorem VI.19]{ourpaper}}]
Let $j\in[t]$, $0\leq\delta\leq n_j-1$, and let $\C\subseteq\MM$ be an MSRD code of $\dim(\C)=\sum_{i=j}^t m_in_i-\delta m_j$. Define $d_{\max}=\sum_{i=1}^{j-1}n_i+\delta+1$. Let $d_{\max}\leq h\leq n$ and $$r=r_h-r_{d_{\max}-1}-m_k+1,$$ 
where $k=\max\{\nu\mid \sum_{i=1}^{\nu-1}n_i<h\}$. Then  $$d_{r}(\C)=\ldots=d_{r+m_k-1}(\C)=h.$$
\end{thm}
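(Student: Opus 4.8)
The plan is to translate generalized weights into the language of optimal anticodes, prove the two inequalities $d_r(\C)\le h$ and $d_r(\C)\ge h$ separately, and then use monotonicity (Proposition~\ref{properties}(2)) to spread the equality over the interval $\{r,r+1,\dots,r+m_k-1\}$. First, unwinding the definitions of $\wt$ and of $d_r$ yields
$$d_r(\C)=\min\{\maxsrk(\A)\ :\ \A=\A_1\times\cdots\times\A_t,\ \A_i\subseteq\FF_q^{m_i\times n_i}\text{ an OAC},\ \dim(\C\cap\A)\ge r\},$$
because a subcode $\D\subseteq\C$ with $\dim\D\ge r$ and $\wt(\D)\le w$ exists exactly when there is such an $\A$ with $\maxsrk(\A)\le w$ and $\dim(\C\cap\A)\ge r$. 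For $0\le a\le n$ set $r_a=\sum_{i=1}^{\nu-1}m_in_i+m_\nu\big(a-\sum_{i=1}^{\nu-1}n_i\big)$ with $\nu=\max\{\mu:\sum_{i=1}^{\mu-1}n_i<a\}$ and $r_0=0$; this is the largest dimension of a product of OACs of total weight $a$, it satisfies $r_n=\dim\MM$ and $r_a-r_{a-1}=m_\nu$ with $\nu$ the block of $a$, and a direct computation from the definition of an MSRD code gives $\dim\C=r_n-r_{d_{\max}-1}$. In this notation the statement is precisely that $d_r(\C)=h$ for those $r$ with $r_{h-1}-r_{d_{\max}-1}<r\le r_h-r_{d_{\max}-1}$, which is the interval of $m_k$ integers starting at $r=r_h-r_{d_{\max}-1}-m_k+1$.

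The upper bound is easy. Choose a product $\A$ of OACs with $\maxsrk(\A)=h$ and $\dim\A=r_h$ (filling the components greedily, largest $m_i$ first). The Grassmann inequality gives $\dim(\C\cap\A)\ge\dim\C+\dim\A-\dim\MM=r_h-r_{d_{\max}-1}=r+m_k-1$, hence $d_{r+m_k-1}(\C)\le\maxsrk(\A)=h$, and therefore $d_s(\C)\le h$ for every $s\le r+m_k-1$.

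The lower bound is the substantial part: I must show that every product $\A$ of OACs with $\maxsrk(\A)\le h-1$ has $\dim(\C\cap\A)\le r-1=r_{h-1}-r_{d_{\max}-1}$, which gives $d_r(\C)\ge h$ and hence, by monotonicity, $d_s(\C)\ge h$ for $s\ge r$. Put $\D=\C\cap\A$, $w=\maxsrk(\A)$, $w_i=\maxrk(\A_i)$. If $w\le d_{\max}-1$ then any nonzero element of $\D\subseteq\A$ has sum-rank weight $\le w<d_{\max}=d(\C)$, so $\D=0$; hence assume $w\ge d_{\max}$. Applying a rank-preserving isometry to each component, we may assume $\A_i$ is the space of matrices whose last $n_i-w_i$ columns vanish, so that $\D$ embeds isometrically for the sum-rank metric into $\MM'=\prod_{w_i>0}\FF_q^{m_i\times w_i}$, a sum-rank ambient space with $\dim\MM'=\dim\A$ in which $d(\D)\ge d(\C)=d_{\max}$. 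Applying the Singleton Bound with $r=1$ to $\D$ inside $\MM'$, taking the index with $\sum_{i<j'}w_i+\delta'=d_{\max}-1$ (legitimate since $d_{\max}-1\le w-1$), gives $\dim\D\le\dim\MM'-r'_{d_{\max}-1}$, where $r'$ is the anticode-dimension function of $\MM'$. It remains to verify $\dim\MM'-r'_{d_{\max}-1}\le r_{h-1}-r_{d_{\max}-1}$, equivalently $r_{h-1}-\dim\MM'\ge r_{d_{\max}-1}-r'_{d_{\max}-1}$: both sides are nonnegative (the right-hand side because $w_i\le n_i$ and $m_1\ge\cdots\ge m_t$ make the relevant step function only decrease), and this is an elementary comparison of the piecewise-linear functions $r$ and $r'$ whose extremal case is the greedy configuration of weight $h-1$, where both sides vanish. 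Combining the two bounds, $h\le d_r(\C)\le d_{r+m_k-1}(\C)\le h$, so $d_r(\C)=\dots=d_{r+m_k-1}(\C)=h$.

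The main obstacle is this last comparison in the lower bound, where one must keep track of how the blocks of $\MM$ and of the shortened space $\MM'$ interleave; this amounts to a short case analysis according to whether $d_{\max}-1$ lands inside a ``full'' component $\FF_q^{m_i\times n_i}$ of $\MM'$ or inside the single truncated one. A cleaner but less self-contained alternative is to invoke that an appropriate shortening of an MSRD code is again MSRD, which pins down $\dim(\C\cap\A)$ exactly and bypasses the combinatorics. The remaining steps — the anticode reformulation, the Grassmann upper bound, and the spreading of the equality over $m_k$ consecutive indices — are routine.
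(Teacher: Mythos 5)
A preliminary remark: this theorem is recalled from~\cite[Theorem~VI.19]{ourpaper} and is not proved anywhere in the present paper, so there is no internal proof to compare yours against; I can only assess your argument on its own terms. Its structure is sound. The reformulation of $d_r(\C)$ as the minimum of $\maxsrk(\A)$ over products $\A$ of optimal anticodes with $\dim(\C\cap\A)\geq r$ follows directly from the definitions of $\wt$ and $d_r$; your reconstruction of $r_a$ as the greedy anticode-dimension function is the intended one, and the identities $\dim(\C)=r_n-r_{d_{\max}-1}$ and $r-1=r_{h-1}-r_{d_{\max}-1}$ check out, so the statement is indeed equivalent to $d_s(\C)=h$ for $r_{h-1}-r_{d_{\max}-1}<s\leq r_h-r_{d_{\max}-1}$. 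The upper bound via Grassmann's inequality applied to the greedy product of OACs of maximum sum-rank $h$ is correct, and the lower bound via an isometric identification of $\A$ with $\MM'=\prod_{w_i>0}\FF_q^{m_i\times w_i}$ followed by the Singleton Bound with $r=1$ (choosing $j',\delta'$ so that $\sum_{i<j'}w_i+\delta'=d_{\max}-1$, which is possible exactly when $\maxsrk(\A)\geq d_{\max}$, the other case being trivial) is the right idea. For square blocks the OAC may be of column-support type, but transposition is a rank isometry, so your normalization is harmless; in fact all you use is that $\A$ with the sum-rank metric is isometric to $\MM'$.

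The one genuine gap is the final inequality $\dim(\MM')-r'_{d_{\max}-1}\leq r_{h-1}-r_{d_{\max}-1}$, which you leave as ``an elementary comparison'' with a case analysis on whether $d_{\max}-1$ falls in a full component or ``the single truncated one''. For an arbitrary product of OACs there is no single truncated component: every $w_i$ may satisfy $0\le w_i<n_i$, so the case analysis as described does not cover the situation, and the nonnegativity of both sides proves nothing. The inequality is nevertheless true, with a one-line proof by comparing increments, which is the missing ingredient. For a shape $u=(u_1,\ldots,u_t)$ with $0\leq u_i\leq n_i$ let $R_u(a)=\sum_{i<\nu}m_iu_i+m_\nu\bigl(a-\sum_{i<\nu}u_i\bigr)$, where $\nu$ is the block of $a$ in the shape $u$; then $r_a=R_n(a)$, $r'_a=R_w(a)$ and $\dim(\MM')=R_w(W)$ with $W=\sum_i w_i$. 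The increment $R_u(p)-R_u(p-1)$ equals $m_\nu$ for $\nu$ the block of position $p$ in shape $u$; since $w_i\leq n_i$, the partial sums of $w$ are bounded by those of $n$, so position $p$ sits in a block of index at least as large for $w$ as for $n$, and since $m_1\geq\cdots\geq m_t$, each increment of $R_w$ is at most the corresponding increment of $R_n$. Summing over $p=d_{\max},\ldots,W$ and using $W\leq h-1$ together with the monotonicity of $R_n$ gives $\dim(\MM')-r'_{d_{\max}-1}=R_w(W)-R_w(d_{\max}-1)\leq R_n(W)-R_n(d_{\max}-1)\leq r_{h-1}-r_{d_{\max}-1}$, which is exactly what your lower bound needs. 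With this step written out, your proof is complete.
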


We end the section by discussing some concepts related to generalized weights which appear in the literature on linear block codes. We start by recalling the definition of relative generalized Hamming weights given in~\cite{luo} for linear block codes. 
    
\begin{defn}\label{defn:relativewts}
Let $\C_2 \subsetneq \C_1 \subseteq \FF_q^n$ be nested codes with $ k_j = \dim(\C_j) $, $ j = 1,2 $. The {\bf $r$-th relative generalized Hamming weight} of the pair $\C_2 \subsetneq \C_1 \subseteq \FF_q^n$ is
$$d_r(\C_1, \C_2) = \min \{ |\supp(\D)| : \D \text{ is a subcode of }\C_1, \D \cap \C_2 = 0\text{ and } \dim(\D) \geq r \},$$
for $r\in[k_1 - k_2]$.
\end{defn}

The concept of greedy weights was introduced in~\cite{cohen1,cohen2} for linear block codes and then modified to the definition that is generally used today in~\cite{chen99}. We extend the definition to sum-rank metric codes 
in the natural way.

\begin{defn}\label{def:greedySR}
Let $0\neq\C \subseteq\MM $ be a $k$-dimensional code. A {\bf greedy $1$-subcode} is a $1$-dimensional subcode $\D_1\subseteq\C$ such that $\wt(\D_1)=d_1(\C)$. For $2\leq r\leq k$, a {\bf greedy $r$-subcode} is an $r$-dimensional subcode $\D_r\subseteq\C$ of minimum weight among those that contain a greedy $(r-1)$-subcode. For $r\in[k]$, we define the {\bf $r$-th greedy weight of} $ \C $ as
$$g_r(\C)=\wt(\D_r)$$ where $\D_r$ is a greedy $r$-subcode of $\C$.
\end{defn}

Finally, we extend the concept of chain condition to sum-rank metric codes. 
The chain condition was originally defined in~\cite{wei-product} for linear block codes. 

\begin{defn}
A $ k $-dimensional code $0\neq\C \subseteq \MM$ satisfies the {\bf chain condition} if there exists a chain of subcodes $ 0 \subsetneq \D_1 \subsetneq \D_2 \subsetneq \ldots \subsetneq \D_k = \C $ such that $ d_r(\C) = \wt(\D_r)$ for all $r\in[k]$. 
\end{defn}

The next simple proposition clarifies the relation between greedy weights and chain condition. 

\begin{prop}\label{proposition:chaingreedy}
Let $ \C \subseteq \MM$ be a $ k $-dimensional code. 
Then $\C$ satisfies the chain condition if and only if $ d_r (\C) = g_r(\C) $ for $r\in[k]$.   
\end{prop}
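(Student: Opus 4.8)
The plan is to prove both implications directly from the definitions, exploiting the fact that the greedy construction builds a specific chain of subcodes.

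\textbf{Sketch of proof.} Suppose first that $\C$ satisfies the chain condition, witnessed by a chain $0\subsetneq\D_1\subsetneq\D_2\subsetneq\ldots\subsetneq\D_k=\C$ with $d_r(\C)=\wt(\D_r)$ for all $r\in[k]$. I would argue by induction on $r$ that $g_r(\C)=d_r(\C)$ and that $\D_r$ is a greedy $r$-subcode. For $r=1$, $\dim(\D_1)=1$ and $\wt(\D_1)=d_1(\C)$, so $\D_1$ is by definition a greedy $1$-subcode and $g_1(\C)=d_1(\C)$. For the inductive step, assume $\D_{r-1}$ is a greedy $(r-1)$-subcode. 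Since $\D_{r-1}\subsetneq\D_r$ and $\dim(\D_r)=r$, the code $\D_r$ is an $r$-dimensional subcode containing a greedy $(r-1)$-subcode, so by definition $g_r(\C)\leq\wt(\D_r)=d_r(\C)$. The reverse inequality $g_r(\C)\geq d_r(\C)$ is immediate from Definition~\ref{definition:genwei}, since any greedy $r$-subcode has dimension $\geq r$. Hence $g_r(\C)=d_r(\C)$; moreover $\D_r$ realizes the minimum weight among $r$-dimensional subcodes containing a greedy $(r-1)$-subcode, so $\D_r$ is itself a greedy $r$-subcode, completing the induction.

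For the converse, assume $d_r(\C)=g_r(\C)$ for all $r\in[k]$. I would produce the required chain by choosing a maximal tower of greedy subcodes: pick a greedy $1$-subcode $\D_1$, then inductively a greedy $r$-subcode $\D_r$ containing $\D_{r-1}$. Such a choice is possible at every step because the definition of greedy $r$-subcode only requires containing \emph{some} greedy $(r-1)$-subcode, and $\D_{r-1}$ was chosen to be one; the minimum in the definition of $g_r$ is then attained by some $r$-dimensional subcode containing $\D_{r-1}$, and we take that to be $\D_r$. This yields $0\subsetneq\D_1\subsetneq\ldots\subsetneq\D_k=\C$ with $\wt(\D_r)=g_r(\C)=d_r(\C)$ for all $r$ (using $\dim(\D_k)=k$ forces $\D_k=\C$), which is exactly the chain condition.

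\textbf{Main obstacle.} The only subtle point is the bookkeeping in the converse direction: one must be careful that the greedy $r$-subcode realizing $g_r(\C)$ can always be taken to \emph{contain the previously chosen} $\D_{r-1}$, rather than merely some other greedy $(r-1)$-subcode. This follows because $g_r(\C)$ is defined as a minimum over \emph{all} $r$-dimensional subcodes containing \emph{a} greedy $(r-1)$-subcode, and among the competitors are those containing the particular greedy $(r-1)$-subcode $\D_{r-1}$; but any such subcode containing $\D_{r-1}$ has weight at least $g_r(\C)$ as well, and the greedy $r$-subcode attaining the global minimum need not contain $\D_{r-1}$. The resolution is to observe that in the greedy construction we are free, at stage $r$, to minimize $\wt(\D_r)$ over $r$-dimensional subcodes containing $\D_{r-1}$ specifically, and that this constrained minimum still equals $g_r(\C)$: indeed the chain $\D_1\subsetneq\ldots\subsetneq\D_k$ built this way has $\wt(\D_r)\geq d_r(\C)=g_r(\C)$ for each $r$, while a dimension count and the inductive optimality give $\wt(\D_r)\leq g_r(\C)$, so equality holds throughout. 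Making this argument precise — essentially showing the greedy process has enough freedom to extend any optimal partial chain — is the one place that requires care, and it is handled by the same inductive comparison with $d_r(\C)$ used in the forward direction.
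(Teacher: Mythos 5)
Your forward direction (chain condition implies $d_r(\C)=g_r(\C)$) is correct and is essentially the paper's argument: the chain witnessing the chain condition is shown inductively to consist of greedy subcodes, using $d_r(\C)\leq g_r(\C)$ and the fact that $\D_r$ contains the greedy $(r-1)$-subcode $\D_{r-1}$. The gap is in the converse. You build the chain bottom-up: you fix an arbitrary greedy $(r-1)$-subcode $\D_{r-1}$ and minimize the weight over $r$-dimensional subcodes containing that specific $\D_{r-1}$, claiming this constrained minimum still equals $g_r(\C)$; the justification offered (``a dimension count and the inductive optimality give $\wt(\D_r)\leq g_r(\C)$'') is not an argument, and the claim is false in general, even under the hypothesis $d_r(\C)=g_r(\C)$ for all $r$. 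Concretely, let $\C=\langle c,c',x\rangle\subseteq\FF_2^5$ with $c=(1,1,0,0,0)$, $c'=(0,0,1,1,0)$, $x=(0,0,0,1,1)$ (Hamming metric, $m_i=n_i=1$). Then $d_1=g_1=2$, $d_2=g_2=3$, $d_3=g_3=5$, and the chain condition holds via $\langle c'\rangle\subsetneq\langle c',x\rangle\subsetneq\C$. However, $\D_1=\langle c\rangle$ is also a greedy $1$-subcode, and every $2$-dimensional subcode of $\C$ containing $c$ (namely $\langle c,c'\rangle$, $\langle c,x\rangle$, $\langle c,c'+x\rangle$) has support of size $4>3=g_2(\C)$. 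So your constrained minimum at stage $2$ is $4$, the chain you construct does not realize the generalized weights, and there is no way to repair it without revisiting the choice of $\D_1$.

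The fix — and the paper's route — is to build the chain top-down rather than bottom-up. By Definition~\ref{def:greedySR}, every greedy $r$-subcode with $r\geq 2$ contains \emph{some} greedy $(r-1)$-subcode, and the greedy $k$-subcode is $\C$ itself; descending from $\C$ one therefore obtains a chain $0\subsetneq\D_1\subsetneq\cdots\subsetneq\D_k=\C$ in which each $\D_r$ is a greedy $r$-subcode, hence $\wt(\D_r)=g_r(\C)=d_r(\C)$ for all $r\in[k]$, which is exactly the chain condition. In this construction no previously chosen subcode ever needs to be extended upward, so the difficulty you correctly identified as the main obstacle simply does not arise.
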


\begin{proof}
If $d_r (\C) = g_r(\C)$ for $ r\in[k] $, then let $0\subsetneq\D_1\subsetneq\ldots\subsetneq\D_{k}\subseteq\C$ be a chain of greedy subcodes. This chain satisfies the chain condition, since $d_r(\C)=g_r(\C)=\wt(\D_r)$ for $r\in[k]$. Conversely, if the chain condition holds, then there exists a chain $0\subsetneq\D_1\subsetneq\ldots\subsetneq\D_{k}\subseteq\C$ with $d_r(\C)=\wt(\D_r)$ for $r\in[k]$. One easily proves by induction on~$r$ that each $\D_r$ is a greedy $r$-subcode, since $\D_r$ is an $r$-dimensional subcode of $\C$ with $\wt(\D_r)=d_r(\C)\leq g_r(\C)$. 
\end{proof}

\section{Generalized Hamming weights of linear block codes}\label{sect:hamming}

In this section we focus on linear block codes and generalized Hamming weights. In the notation of Section~\ref{Sec:not}, this corresponds to letting $m_1=1$ and $n_1=\ldots=n_t=1$. Notice that this implies that $m_2=\ldots=m_t=1$ and $t=n$. We start by showing that every increasing sequence of positive integers is the sequence of generalized weights of a linear block code. Then we establish similar results for relative and greedy weights. 

\begin{thm}\label{th all seqs are ghws}
Any increasing sequence of positive integers is the sequence of generalized Hamming weights of a linear block code. In addition, there exists one such code in $\FF_q^n$, provided that $n$ is greater than or equal to the last integer in the sequence and $q\geq n$. Moreover, the code may always be chosen such that it satisfies the chain condition. In particular, any increasing sequence of positive integers is the sequence of greedy weights of a linear block code.
\end{thm}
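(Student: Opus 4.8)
The plan is to construct, for a given increasing sequence $1\le a_1<a_2<\cdots<a_k$ with $a_k\le n$ and $q\ge n$, a linear block code $\C\subseteq\FF_q^n$ of dimension $k$ whose $r$-th generalized Hamming weight is exactly $a_r$, and which moreover satisfies the chain condition. The natural idea is to build $\C$ from a chain of MDS codes of increasing length. Concretely, for each $r\in[k]$ consider an MDS code $E_r$ of dimension $r$ and length $a_r$ over $\FF_q$; since $q\ge n\ge a_r\ge a_k\ge\cdots$, such codes exist (e.g.\ Reed--Solomon codes, using that $q$ is at least the length). By Wei's formula for MDS codes (\cite[Proposition~1]{Wei}), the generalized weights of an $[a_r,r]$ MDS code are $d_s(E_r)=a_r-r+s$ for $s\in[r]$; in particular $d_r(E_r)=a_r$, so the \emph{top} weight of $E_r$ equals the target value $a_r$. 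The task is then to glue these codes together into a single chain $0\subsetneq\D_1\subsetneq\cdots\subsetneq\D_k=\C$ sitting inside $\FF_q^n$, with $\D_r$ essentially a copy of $E_r$ placed on the first $a_r$ coordinates (padding with zeros), arranged so that $\wt(\D_r)=|\supp(\D_r)|=a_r$ and so that the nesting is respected.

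The key steps, in order, are: (1) fix generator matrices so that $\D_r$ is generated by the first $r$ rows of a single $k\times n$ matrix $G$, where row $\ell$ is supported on the first $a_\ell$ coordinates; one wants the top row of the $\D_r$-block to ``activate'' exactly the coordinates $a_{r-1}+1,\dots,a_r$ while the already-present rows keep $\D_{r-1}$ supported on the first $a_{r-1}$ coordinates. A clean way is to take $G$ block-lower-triangular with respect to the coordinate partition $\{1,\dots,a_1\},\{a_1+1,\dots,a_2\},\dots$, choosing the diagonal blocks (and the already-filled part of each new row) so that each $\D_r$ restricted to its support is MDS of the right parameters — this is where one invokes existence of the MDS codes $E_r$ and the condition $q\ge n$. (2) Verify $\wt(\D_r)=a_r$: since $\D_r$ lives on the first $a_r$ coordinates, its support (hence Hamming weight as a code) is at most $a_r$; for the reverse inequality, use that $\D_r$ contains $\D_r$-vectors hitting every one of the first $a_r$ coordinates, which follows from the MDS structure (an MDS code of length $a_r$ has full support). (3) Conclude $d_r(\C)\le\wt(\D_r)=a_r$ for all $r$, and then get the matching lower bound from Proposition~\ref{properties}: items (2) and (5) for block codes give $d_r(\C)\ge d_{r-1}(\C)+1$ and $d_1(\C)=d(\C)$; combined with the necessary-condition inequalities these force $d_r(\C)=a_r$ once $d_1(\C)=a_1$ is checked directly (the minimum distance of $\C$ equals that of $\D_1$, an $[a_1,1]$ MDS code, namely $a_1$). (4) The chain $0\subsetneq\D_1\subsetneq\cdots\subsetneq\D_k=\C$ then witnesses the chain condition, and by Proposition~\ref{proposition:chaingreedy} the $g_r(\C)$ coincide with the $d_r(\C)$, giving the greedy-weights statement for free.

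The main obstacle I anticipate is step (1): arranging the chain so that \emph{simultaneously} every $\D_r$ is MDS of parameters $[a_r,r]$ on its support \emph{and} the codes are nested. Nestedness forces the generator matrix of $\D_{r-1}$ to be a submatrix of that of $\D_r$, so one cannot freely pick each $E_r$ independently — one must build $G$ row by row, at each stage extending an $[a_{r-1},r-1]$ MDS code (on the first $a_{r-1}$ coordinates) to an $[a_r,r]$ MDS code (on the first $a_r$ coordinates) that restricts to it. One should check this is always possible for $q\ge n$: e.g.\ realize each $\D_r$ as a (shortened/truncated) Reed--Solomon code with carefully chosen evaluation points and a common underlying evaluation set, so that passing from $\D_{r-1}$ to $\D_r$ amounts to adding one evaluation-polynomial of degree $r-1$ and enlarging the point set — genericity of a new row (avoiding the finitely many bad hyperplanes that would drop the support or the MDS property) is guaranteed once $q$ is large enough, and $q\ge n$ suffices by the standard Reed--Solomon count. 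A secondary, lesser point is being careful at the ``boundary'' when consecutive $a_\ell$ differ by more than $1$, so that the single new row in passing from $\D_{r-1}$ to $\D_r$ indeed turns on all the coordinates $a_{r-1}+1,\dots,a_r$ at once; again the MDS (full-support) property of $\D_r$ handles this.
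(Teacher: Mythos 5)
There is a genuine gap, and it sits exactly at the point you flag as your ``main obstacle'': your plan requires every member $\D_r$ of the chain to be MDS of parameters $[a_r,r]$ on its support, and this is impossible --- over any field, for any $q$ --- as soon as the sequence has a gap. Indeed, if $\D_{r-1}\subseteq\D_r$ with $\D_{r-1}$ an $[a_{r-1},r-1]$ MDS code on the first $a_{r-1}$ coordinates, then $\D_{r-1}$ contains a nonzero codeword of weight $a_{r-1}-r+2$, while an $[a_r,r]$ MDS code has minimum distance $a_r-r+1$; nestedness therefore forces $a_{r-1}-r+2\geq a_r-r+1$, i.e.\ $a_r\leq a_{r-1}+1$. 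So already for the sequence $2,5$ no such nested pair exists, and no genericity or row-by-row extension argument can repair this: it is a structural obstruction, not a matter of avoiding finitely many bad hyperplanes. A second, related error is the lower bound in your step (3): strict monotonicity $d_r(\C)\geq d_{r-1}(\C)+1$ together with $d_1(\C)=a_1$ only gives $d_r(\C)\geq a_1+r-1$, which equals $a_r$ only when the sequence consists of consecutive integers. When there are gaps, the inequality $d_r(\C)\geq a_r$ is the hard part of the theorem and requires an argument about \emph{all} $r$-dimensional subcodes of $\C$, which your proposal does not supply.

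The paper's proof resolves both issues simultaneously by decomposing the sequence into maximal runs of consecutive integers and inducting on the number of runs. If the first run has length $s$, the code is $\C=\E\oplus\D$, where $\E$ is an MDS code of dimension $s$ supported on the first $a_s$ coordinates (obtained by intersecting a Reed--Solomon code $\rs(n,n-a_1+1)$ with the span of the first standard basis vectors), and $\D$ is a code realizing the remaining entries, constructed inductively inside a smaller nested Reed--Solomon code so that $d(\D)=a_{s+1}>a_s\geq|\supp(\E)|$, whence $\D\cap\E=0$. Only the chain members inside the first run are MDS on their supports; beyond that, the key point is that any subcode of $\C$ of dimension $i+s$ meets $\D$ in dimension at least $i$ (because $\dim\E=s$), giving the missing lower bound $d_{i+s}(\C)\geq d_i(\D)$, while the upper bound comes from the subcodes $\D_i+\E$, which are supported on initial segments of coordinates but are not MDS. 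If you want to salvage your bottom-up chain idea, you must drop the requirement that each $\D_r$ be MDS and replace it by this kind of direct-sum structure in which the complement has large minimum distance --- which is essentially the paper's construction.
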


\begin{proof}
Any non-empty increasing sequence $d_1,\ldots,d_k$ of positive integers can be uniquely written as the juxtaposition of $\ell$ maximal subsequences with the property that any two consecutive entries in the same subsequence differ by one. Let $a_i$ be the length of the $i$-th such subsequence. In other words, the sequence $d_1,\ldots,d_k$ has the form 
$$\underbrace{d_1,d_1+1,\ldots,d_1+a_1-1}_{a_1},\underbrace{d_{a_1+1},\ldots,d_{a_1+1}+a_2-1}_{a_2},\ldots,\underbrace{d_{a_1+\cdots+a_{\ell-1}+1},\ldots,d_{a_1+\cdots+a_{\ell-1}+1}+a_\ell-1}_{a_\ell}$$
for some $\ell\geq 1$ and $a_1,\ldots,a_\ell\geq 1$. Then the length of the sequence is $k=a_1+\cdots+a_{\ell}$ and we let $n=d_k$. 

Denote by $\rs(n,h)$ a Reed-Solomon code of dimension $h$ and minimum distance $n-h+1$ in $\FF_q^n$ for some $q\geq n$.
By induction on $\ell\geq 1$, we prove that there exists a code $\C\subseteq\rs(n,n-d_1+1)$ with the properties that its generalized weights are the given sequence of integers and that it has subcodes $0\subsetneq\C_1\subsetneq\ldots\subsetneq\C_k=\C$ such that $\dim(\C_i)=i$ and $\supp(\C_i)=[d_i]$ for all $i\in[k]$.

If $\ell=1$, then $k=a_1$, $n=d_k=d_1+a_1-1$, and the sequence consists of $a_1$ consecutive positive integers: $d_1,d_1+1,\ldots,d_1+a_1-1$. One can let $\C=\rs(n,n-d_1+1)$. A chain of subcodes $\C_1\subseteq\ldots\subseteq\C_k=\C$ such that $\dim(\C_i)=i$ and $\supp(\C_i)=[d_i]$ for all $i\in[a_1]$ is constructed by letting $\C_i$ be the subcode of $\C$ supported on the first $d_i$ entries, for $i\in[a_1]$.

By induction, suppose that we can construct a code $\D\subseteq\rs(n,n-d_{a_1+1}+1)$, whose generalized weights are 
$$\underbrace{d_{a_1+1},\ldots,d_{a_1+1}+a_2-1}_{a_2},\ldots,\underbrace{d_{a_1+\cdots+a_{\ell-1}+1},\ldots,d_{a_1+\cdots+a_{\ell-1}+1}+a_\ell-1}_{a_\ell}.$$
In addition, $\D$ contains a chain of subcodes $\D_1\subseteq\ldots\subseteq\D_{a_2+\cdots+a_\ell}=\D$ with $\dim(\D_i)=i$ and $\supp(\D_i)=[d_i(\D)]=[d_{i+a_1}]$, for all $i\in[a_2+\cdots+a_\ell]$.

Denote by $e_1,\ldots,e_n$ the elements of the standard basis of $\FF_q^n$. Consider nested codes $\rs(n,n-d_1+1)\supseteq\rs(n,n-d_{a_1+1}+1)\supseteq\D$ and let $$\E_i=\rs(n,n-d_1+1)\cap\langle e_1,\ldots,e_{d_1+i-1}\rangle$$ for $i\in[a_1]$.
Regarded as a subcode of $\langle e_1,\ldots,e_{d_1+i-1}\rangle=\FF_q^{d_1+i-1}$, each $\E_i$ is an MDS code of parameters $(d_1+i-1,i,d_1)$. In fact, $\E_i$ has minimum distance $d(\E_i)=d_1$, since $\rs(n,n-d_1+1)$ contains a codeword supported on the first $d_1\leq d_1+i-1$ coordinates. Moreover, $\E_i$ is the subspace of $\rs(n,n-d_1+1)$ obtained by evaluating the polynomials of degree up to $n-d_1$ whose evaluation in the last $n-d_1-i+1$ points is zero. Therefore, $\dim(\E_i)\geq (n-d_1+1)-(n-d_1-i+1)=i$. Since $\E_i$ has length $d_1+i-1$, it is MDS by the Singleton Bound. Let $\E=\E_{a_1}$.
Let $\C=\D+\E\subseteq\rs(n,n-d_1+1)$.
Since $d(\D)=d_{a_1+1}>d_{a_1}=d_1+a_1-1\geq|\supp(\E)|$, then $\D\cap\E=0$ and $$\dim(\C)=\dim(\D)+\dim(\E)=(a_2+\ldots+a_\ell)+a_1=k.$$ 
Since $\E\subseteq\C\subseteq\rs(n,n-d_1+1)$, then 
$$d_1+i-1=d_i(\rs(n,n-d_1+1))\leq d_i(\C)\leq d_i(\E)=d_1+i-1$$ for $i\in[a_1]$. It follows that the first $a_1$ generalized weights of $\C$ agree with the first $a_1$ elements of the integer sequence. 

Since $\C=\D+\E$ and $\D\cap\E=0$, then any subcode of $\C$ of dimension $i+a_1$ with $i\geq 1$ contains a subcode of $\D$ of dimension $i+a_1-\dim(\E)=i$. It follows that $d_{i+a_1}(\C)\geq d_i(\D)=d_{i+a_1}$.
By the induction hypothesis, for all $i\in[a_2+\cdots+a_\ell]$ there exists $\D_i\subseteq\D$ of $\dim(\D_i)=i$, such that $\supp(\D_i)=[d_{i}(\D)]=[d_{i+a_1}]$. Notice that $d_{i+a_1}\geq d_1+i+a_1-1\geq d_1+a_1$,
hence $[d_{i+a_1}]=\supp(\D_i)\supseteq\supp(\E)=[d_1+a_1-1]$. Since $\D_i\cap\E\subseteq\D\cap\E=0$, then $\D_i+\E$ is an $(i+a_1)$-dimensional subspace of $\C$ with $$\supp(\D_i+\E)=\supp(\D_i)=[d_{i+a_1}],$$ 
so $d_{i+a_1}(\C)\leq d_{i+a_1}$. This proves that $d_{i+a_1}(\C)=d_{i+a_1}$ for $i\in[a_2+\cdots+a_{\ell}]$ and concludes the proof that $d_1,\ldots,d_k$ are the generalized weights of $\C$.

We now show that $\C$ contains a chain of codes as claimed. Recall that $\E$ contains a chain of codes $$0\subsetneq\E_1\subsetneq\ldots\subsetneq\E_{a_1}=\E$$ such that $\dim(\E_i)=i$ and $\supp(\E_i)=[d_i]$, for $i\in[a_1]$. Moreover
$$\E\subsetneq\D_1+\E\subsetneq\ldots\subsetneq\D_{a_2+\cdots+a_{\ell}}+\E=\D+\E=\C,$$ with $\dim(\D_i+\E)=i+a_1$ and $\supp(\D_i+\E)=[d_{i+a_1}]$, for $i\in[a_2+\cdots+a_\ell]$. This proves in addition that $\C$ satisfies the chain condition, since $d_i(\C)=d_i(\E)=\wt(\E_i)$ for $i\in[a_1]$ and $d_i(\C)=d_{i-a_1}(\D)=\wt(\D_{i-a_1}+\E)$ for $i\in[k]\setminus[a_1]$.
Therefore, the greedy weights of $\C$ coincide with its generalized weights by Proposition~\ref{proposition:chaingreedy}. 
\end{proof}

We make the following observations regarding the construction of the code $\C$ in the proof of Theorem~\ref{th all seqs are ghws}.

\begin{rmks}\label{rmks:Hamming}
\begin{itemize}
\item[i)] The construction yields a code $\C$ of length $n$ defined over a field of cardinality $q\geq n$ and which is contained in a $\rs(n,n-d_1+1)$, where $d_1$ is the first entry of the sequence and $n$ is the last. Notice that $n$ points of evaluation in $\FF_q$ are needed in order to have a sequence of nested Reed-Solomon codes, as Reed-Solomon codes using evaluation at infinity are not nested. See also Example~\ref{ex:evaatinfty}.
\item[ii)] The length $n$ of the code can be chosen to be larger than the last generalized weight. In this case, the construction yields a degenerate code. 
\end{itemize}
\end{rmks}

While it is clear that classical Reed-Solomon codes are nested, this is not the case in general if one uses infinity as an evaluation point. We next show a concrete example of this phenomenon.

\begin{ex}\label{ex:evaatinfty}
Let $q$ be a prime power and let $\FF_q=\{\alpha_1,\ldots,\alpha_q\}$. The code $\rs(q+1,k)$ is the image of the encoding map $$\begin{array}{rcl}
\FF_q[x]_{<k} & \rightarrow &\FF_q^{q+1}\\
p(x) & \mapsto & (p(\alpha_1),\ldots,p(\alpha_q),p(\infty))
\end{array}$$
where $\FF_q[x]_{<k}$ is the space of univariate polynomials with coefficients in $\FF_q$ of degree smaller than $k$ and $p(\infty)$ is the coefficient of $x^{k-1}$ in $p(x)$. One may have $\rs(q+1,k)\not\subseteq\rs(q+1,k+1)$. E.g., for any $q$ one has $$\rs(q+1,1)=\langle(1,\ldots,1)\rangle\not\subseteq\rs(q+1,2)=\langle(1,\ldots,1,0),(\alpha_1,\ldots,\alpha_q,1)\rangle.$$
\end{ex}

Theorem~\ref{th all seqs are ghws} may fail over a field of small cardinality. This is connected to Remark~\ref{rmks:Hamming} i), since over a field of small cardinality (chains of nested) MDS codes may not exist for all choices of the parameters. The next examples illustrate what can go wrong over~$\FF_2$.

\begin{ex}
The integer sequence $n-1,n$ satisfies the necessary conditions of Proposition~\ref{properties}. However, by~\cite[Theorem~10]{HKY}, it is the sequence of a generalized weights of a two-dimensional subcode of $\FF_2^n$ if and only if $3(n-1)\leq 2n$, that is $n\leq 3$. This is related to the non-existence of a chain of nested MDS codes of dimension one and two in $\FF_2^n$.
\end{ex}

\begin{ex}
The integer sequence $4,5,7$ satisfies the necessary conditions of Proposition~\ref{properties}. However, \cite[Theorem~10]{HKY} implies that it is not the sequence of generalized weights of a three-dimensional subcode of $\FF_2^7$. This is related to the non-existence of a chain of binary nested MDS codes with parameters $(7,1,7)$ and $(7,4,4)$.
\end{ex}

In the language of matroidal ideals, we can reformulate Theorem~\ref{th all seqs are ghws} as follows.

\begin{cor} 
Any increasing sequence of positive integers can be realized as the sequence of initial degrees of the free modules in a graded minimal free resolution of a matroidal ideal.
\end{cor}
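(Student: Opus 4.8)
The plan is to read the corollary off the dictionary relating linear block codes, matroids, and the minimal free resolutions of their Stanley--Reisner rings. Given an increasing sequence of positive integers $d_1<\cdots<d_k$, the first step is to invoke Theorem~\ref{th all seqs are ghws} to obtain a linear block code $\C\subseteq\FF_q^n$, with $n=d_k$ and $q\geq n$, whose generalized Hamming weights are exactly $d_1,\ldots,d_k$.

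Next I would pass to the matroid $M$ on $[n]$ of rank $n-k$ attached to $\C$, namely the column matroid of a parity-check matrix of $\C$; equivalently, $M$ is the matroid whose circuits are the inclusion-minimal supports of nonzero codewords of $\C$. Let $I_M\subseteq S=\FF_q[x_1,\ldots,x_n]$ be the matroidal ideal of $M$, that is, the Stanley--Reisner ideal of the independence complex $\Delta_M$ of $M$; it is generated by the squarefree monomials $\prod_{i\in\sigma}x_i$ with $\sigma$ a circuit of $M$. Independence complexes of matroids are shellable, hence Cohen--Macaulay, so the Auslander--Buchsbaum formula gives $\operatorname{pd}_S(S/I_M)=n-\dim(S/I_M)=n-(n-k)=k$; thus the minimal graded free resolution of $S/I_M$ reads
$$0\to F_k\to F_{k-1}\to\cdots\to F_1\to F_0=S\to S/I_M\to 0,$$
with every $F_i$ nonzero, since the homological positions of the nonzero Betti numbers of a module form an interval $\{0,1,\ldots,\operatorname{pd}\}$. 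In particular each $F_i$ has a well-defined initial degree $\operatorname{indeg}(F_i)=\min\{j:\beta_{i,j}(S/I_M)\neq 0\}$, and $\operatorname{indeg}(F_0)=0$.

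The remaining ingredient is the theorem of Johnsen and Verdure identifying the generalized Hamming weights of a code with the initial degrees of precisely this resolution: $d_i(\C)=\min\{j:\beta_{i,j}(S/I_M)\neq 0\}$ for all $i\in[k]$. Combining the three steps, $\operatorname{indeg}(F_i)=d_i(\C)=d_i$ for $i\in[k]$, so the matroidal ideal $I_M$ realizes the prescribed sequence as the initial degrees of the nontrivial free modules in the minimal resolution of $S/I_M$; resolving $I_M$ itself rather than $S/I_M$ merely shifts the homological index and produces the sequence $d_1,\ldots,d_k$ on the nose. This proves the corollary.

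I do not expect a genuinely hard step: the code is produced by Theorem~\ref{th all seqs are ghws}, and the translation into initial degrees is the Johnsen--Verdure correspondence. The points that need care are essentially bookkeeping: that ``initial degree'' is exactly the invariant this correspondence computes (it pins down the \emph{least} twist occurring in each $F_i$, not the full multiset of twists in $F_i$); that the resolution has length exactly $k$ with no intermediate vanishing free module, which is the Cohen--Macaulay plus Auslander--Buchsbaum argument above; and that one must use the rank $n-k$ matroid coming from a parity-check matrix of $\C$ (equivalently, the matroid of the dual code $\C^\perp$) rather than the rank $k$ matroid of a generator matrix, so that the circuits of $M$ are the minimal codeword supports of $\C$ and in particular $\operatorname{indeg}(F_1)=d_1(\C)=d_1$.
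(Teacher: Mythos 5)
Your proposal is correct and follows the paper's own route: the paper proves the corollary exactly by combining Theorem~\ref{th all seqs are ghws} with the Johnsen--Verdure correspondence \cite[Theorem~2]{JRV}, which is precisely your argument, only spelled out in more detail (parity-check matroid, Cohen--Macaulayness and Auslander--Buchsbaum for the length of the resolution, and the homological shift between resolving $I_M$ and $S/I_M$). The extra bookkeeping you include is accurate and harmless.
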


\begin{proof}
The thesis follows by combining~\cite[Theorem~2]{JRV} and Theorem~\ref{th all seqs are ghws}.
\end{proof}

The construction of the code $\C$ in the proof of Theorem~\ref{th all seqs are ghws} can be made explicit, by providing an example of a code which satisfies the chain condition and whose generalized weights are equal to any given increasing sequence of positive integers.

\begin{ex}
Let $ 1 \leq k \leq n \leq q $ and $ 1 \leq d_1 < d_2 < \ldots < d_k \leq n $. Let $\alpha_1,\alpha_2,\ldots,\alpha_n\in\mathbb{F}_q$ be distinct elements and consider the polynomials $ f_i = (x-\alpha_{d_i+1}) (x-\alpha_{d_i+2}) \cdots (x-\alpha_n) \in \mathbb{F}_q[x]$, $i\in[k]$. The $ k $-dimensional code $ \mathcal{C} \subseteq \mathbb{F}_q^n $ with generator matrix
$$ G = \left( \begin{array}{cccc}
f_1(\alpha_1) & f_1(\alpha_2) & \ldots & f_1(\alpha_n) \\
f_2(\alpha_1) & f_2(\alpha_2) & \ldots & f_2(\alpha_n) \\
\vdots & \vdots & \ddots & \vdots \\
f_k(\alpha_1) & f_k(\alpha_2) & \ldots & f_k(\alpha_n) 
\end{array} \right) $$
satisfies $ d_r(\mathcal{C}) = d_r $, for $r \in [k]$. Moreover, $d_r(\mathcal{C})$ is realized by the subspace generated by the first $r$ rows of $G$. In particular, $\C$ satisfies the chain condition and $g_r(\C)=d_r$ for $r\in[k]$.
\end{ex}

We now turn our attention to relative weights. The Singleton Bound for relative generalized Hamming weights~\cite[Section~IV]{luo} provides a necessary condition for a sequence of integers to be the relative generalized Hamming weights of a pair of linear block codes.

\begin{lemma}\label{lemma:relwts}
Let $\C_2 \subsetneq \C_1 \subseteq \mathbb{F}_q^n $ be codes with $ k_j = \dim(\C_j) $, $ j = 1,2 $. Then
$$ d_i(\C_1, \C_2) \leq n - k_1 + i$$ for $i\in[k_1-k_2].$ 
\end{lemma}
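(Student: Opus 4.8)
The plan is to prove the bound by exhibiting, for each $i \in [k_1 - k_2]$, an $i$-dimensional subcode $\D$ of $\C_1$ with $\D \cap \C_2 = 0$ and $|\supp(\D)| \leq n - k_1 + i$; by Definition~\ref{defn:relativewts}, the existence of such a $\D$ gives $d_i(\C_1,\C_2) \leq |\supp(\D)| \leq n - k_1 + i$. The natural way to produce such a $\D$ is to restrict $\C_1$ to a well-chosen set of coordinates. For a subset $T \subseteq [n]$, let $\C_1(T)$ denote the subcode of $\C_1$ consisting of the codewords supported on $T$; then $\supp(\C_1(T)) \subseteq T$, so $|\supp(\C_1(T))| \leq |T|$. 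Choosing $T$ with $|T| = n - k_1 + i$ amounts to puncturing at $k_1 - i$ coordinates, and a dimension count shows $\dim \C_1(T) \geq \dim \C_1 - (k_1 - i) = i$, since imposing one linear condition (one coordinate being zero) drops the dimension by at most one.

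The issue is that this generic choice of $T$ only guarantees $\dim \D \geq i$, not $\D \cap \C_2 = 0$. To handle the relative condition, I would instead build $\D$ by a one-coordinate-at-a-time puncturing argument, as is standard for relative generalized weights. Start with $\D^{(0)} = \C_1$, which has dimension $k_1$ and intersects $\C_2$ in $\C_2$ itself, of dimension $k_2$. At each step, given a subcode $\D^{(s)} \subseteq \C_1$ with $\dim \D^{(s)} - \dim(\D^{(s)} \cap \C_2) = k_1 - k_2$ (which holds for $s=0$), I claim one can puncture at a single coordinate so that the new subcode $\D^{(s+1)} = \D^{(s)} \cap \langle e_j : j \neq j_{s+1}\rangle$ has $\dim \D^{(s+1)} = \dim \D^{(s)} - 1$ while still $\dim \D^{(s+1)} - \dim(\D^{(s+1)} \cap \C_2) = k_1 - k_2$. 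This is possible as long as $\D^{(s)} \not\subseteq \C_2$: pick a codeword $c \in \D^{(s)} \setminus \C_2$ and a coordinate $j_{s+1}$ in its support; intersecting with the hyperplane $\{x_{j_{s+1}} = 0\}$ kills $c$, so it strictly decreases $\dim \D^{(s)}$ (hence by exactly one), and since $c \notin \C_2$ the intersection with $\C_2$ is unchanged, preserving the codimension-$(k_1-k_2)$ relation. After $k_1 - i$ such steps we reach $\D^{(k_1-i)}$ of dimension $i$, still with $\dim(\D^{(k_1-i)} \cap \C_2) = i - (k_1 - k_2) = k_2 - (k_1 - i) \leq k_2 - 1 < k_2$; but this does not yet give $\D \cap \C_2 = 0$.

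To finish, I would run the puncturing process further, or more cleverly, split it: first perform $k_1 - k_2$ puncturing steps designed to reduce $\dim(\D \cap \C_2)$ to $0$ — at each such step, picking the killing codeword $c$ inside $\D^{(s)} \cap \C_2$ itself, which drops both $\dim \D^{(s)}$ and $\dim(\D^{(s)} \cap \C_2)$ by one — reaching after $k_1 - k_2$ steps a subcode $\D'$ with $\dim \D' = k_1 - (k_1 - k_2) = k_2$... wait, that overshoots; instead interleave so that exactly $k_2$ of the coordinates killed lie in the support of $\C_2$-codewords (possible since repeatedly one can find $c \in \D^{(s)} \cap \C_2$ nonzero as long as that intersection is nonzero, and its support is nonempty), and the remaining $k_1 - i - k_2$ steps kill codewords outside $\C_2$. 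Total coordinates removed: $k_1 - i$ if $k_1 - i \geq k_2$, and one checks $k_1 - i \geq k_2$ exactly when $i \leq k_1 - k_2$, which is our hypothesis. The result is $\D = \D^{(k_1 - i)}$ with $\dim \D = i$, $\D \cap \C_2 = 0$, and $|\supp(\D)| \leq n - (k_1 - i) = n - k_1 + i$, as required. The main obstacle is bookkeeping the order of the puncturing steps so that both the dimension drops and the intersection-with-$\C_2$ drops land on their target values simultaneously; the key observation making it work is that killing a codeword in $\D^{(s)} \setminus \C_2$ leaves $\dim(\D^{(s)} \cap \C_2)$ fixed, while killing one in $\D^{(s)} \cap \C_2$ drops it by one, so the two counters can be driven independently.
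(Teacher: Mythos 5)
The paper itself does not prove this lemma: it is quoted as the Singleton bound for relative generalized Hamming weights from~\cite[Section~IV]{luo}, so your argument has to stand on its own. Your overall shortening strategy is workable, but one of the two claims it leans on is false: if you shorten $\D^{(s)}$ at a coordinate $j$ in the support of some $c\in\D^{(s)}\setminus\C_2$, it is \emph{not} true that $\D^{(s)}\cap\C_2$ is unchanged. The intersection drops whenever some codeword of $\D^{(s)}\cap\C_2$ is nonzero at $j$, regardless of which codeword motivated the choice of $j$; for instance, with $\C_2=\langle(1,1,0)\rangle$, $\D^{(s)}=\langle(1,0,0),(1,1,0)\rangle$, $c=(1,0,0)$ and $j=1$, the shortened code is $\langle(0,1,0)\rangle$, which meets $\C_2$ trivially. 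Relatedly, the invariant announced in your second paragraph cannot hold as stated: if $\dim\D^{(s)}$ drops by one while $\dim(\D^{(s)}\cap\C_2)$ stays fixed, the difference $\dim\D^{(s)}-\dim(\D^{(s)}\cap\C_2)$ decreases rather than being preserved. Consequently the interleaving bookkeeping in your last paragraph, which counts on exactly $k_2$ steps touching the intersection and on the remaining steps leaving it alone, is not justified as written.

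The repair is easy and keeps your mechanism; just order the steps instead of interleaving. As long as $\D^{(s)}\cap\C_2\neq 0$, pick a nonzero $c\in\D^{(s)}\cap\C_2$ and shorten at a coordinate of its support (necessarily a coordinate not used before, since the current code vanishes on all previously removed coordinates); each such step drops both $\dim\D^{(s)}$ and $\dim(\D^{(s)}\cap\C_2)$ by exactly one, so after $k_2$ steps the intersection is zero and the dimension is $k_1-k_2\geq i$. Then shorten at $k_1-i-k_2\geq 0$ further new coordinates, chosen arbitrarily: the intersection stays zero and the dimension drops by at most one per step, so the final code $\D$ satisfies $\dim\D\geq i$, $\D\cap\C_2=0$, and $\supp(\D)$ lies in the complement of the $k_1-i$ removed coordinates, whence $d_i(\C_1,\C_2)\leq n-k_1+i$. (Accidental drops of the intersection at the other steps only help; the point is to acknowledge them rather than deny them.) Alternatively, you can avoid the induction altogether: choose $S\subseteq[n]$ with $|S|=k_1-i\geq k_2$ containing an information set of $\C_2$, and let $\D$ be any $i$-dimensional subspace of the shortening of $\C_1$ at $S$, which exists because shortening at $|S|$ coordinates drops the dimension by at most $|S|$; then $\D\cap\C_2=0$ because a codeword of $\C_2$ vanishing on an information set is zero, and $|\supp(\D)|\leq n-k_1+i$.
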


In particular, the last relative weight satisfies $ d_{k_1 - k_2}(\C_1, \C_2) \leq n - k_2 $. Since monotonicity also holds by~\cite[Proposition~2]{luo}, every sequence $d_1,\ldots,d_{k_1-k_2}$ of relative generalized Hamming weights of a pair of subcodes of $\FF_q^n$ as above must satisfy $$ 1 \leq d_1 < d_2 < \cdots < d_{k_1-k_2} \leq n - k_2.$$

Fix $0\leq k_2<k_1\leq n$ and let $k=k_1-k_2$.
We next show that any sequence of generalized Hamming weights of a $k$-dimensional subcode of $\FF_q^{n-k_2}$ is also the sequence of relative generalized Hamming weights of a pair of nested subcodes of $\FF_q^{n}$ dimension $k_1$ and $k_2$.

\begin{lemma} \label{lemma ghws are rghws}
Let $ 0 \leq k_2 < k_1 \leq n $ be positive integers. Let $ \C \subseteq \FF_q^{n - k_2} $ be a code of dimension $ k_1 - k_2 $. Define $ \C_1 = \C \times \FF_q^{k_2} $ and $ \C_2 = 0\times \FF_q^{k_2}$, where $0$ denotes the zero code in $\FF_q^{n-k_2}$. Then
$$ d_i(\C_1, \C_2) = d_i(\C), $$
for $i\in[k_1 - k_2]$.
\end{lemma}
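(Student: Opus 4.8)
The plan is to show the two inequalities $d_i(\C_1,\C_2)\leq d_i(\C)$ and $d_i(\C_1,\C_2)\geq d_i(\C)$ separately, exploiting the product structure $\C_1=\C\times\FF_q^{k_2}$, $\C_2=0\times\FF_q^{k_2}$, and the fact that the support of a codeword $(c,v)\in\FF_q^{n-k_2}\times\FF_q^{k_2}$ is the disjoint union $\supp(c)\sqcup\supp(v)$, so that $|\supp((c,v))|=|\supp(c)|+|\supp(v)|$.

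For the inequality $d_i(\C_1,\C_2)\leq d_i(\C)$, I would take an $i$-dimensional subcode $\D\subseteq\C$ with $|\supp(\D)|=d_i(\C)$ and form $\widetilde{\D}=\D\times 0\subseteq\C_1$. This $\widetilde\D$ has dimension $i$, satisfies $\widetilde\D\cap\C_2=(\D\times 0)\cap(0\times\FF_q^{k_2})=0$, and has $\supp(\widetilde\D)=\supp(\D)$ on the first $n-k_2$ coordinates, so $|\supp(\widetilde\D)|=d_i(\C)$. By Definition~\ref{defn:relativewts} this gives $d_i(\C_1,\C_2)\leq d_i(\C)$.

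For the reverse inequality, let $\D\subseteq\C_1$ be any subcode with $\D\cap\C_2=0$ and $\dim(\D)\geq i$; I must show $|\supp(\D)|\geq d_i(\C)$. Let $\pi\colon\FF_q^{n}\to\FF_q^{n-k_2}$ be the projection onto the first $n-k_2$ coordinates, so $\pi(\C_1)=\C$ and $\ker(\pi|_{\C_1})=0\times\FF_q^{k_2}=\C_2$. Since $\D\cap\C_2=0$, the restriction $\pi|_\D$ is injective, hence $\pi(\D)$ is a subcode of $\C$ of dimension $\dim(\D)\geq i$. Moreover $\supp(\pi(\D))\subseteq\pi(\supp(\D))\subseteq[n-k_2]$, so $|\supp(\pi(\D))|\leq|\supp(\D)\cap[n-k_2]|\leq|\supp(\D)|$. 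By Definition~\ref{definition:genwei} applied to the block code $\C$ (equivalently, the classical definition of generalized Hamming weights), $|\supp(\pi(\D))|\geq d_{\dim(\pi(\D))}(\C)\geq d_i(\C)$ using monotonicity (Proposition~\ref{properties}(2)). Combining, $|\supp(\D)|\geq d_i(\C)$, and taking the minimum over all admissible $\D$ yields $d_i(\C_1,\C_2)\geq d_i(\C)$.

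I do not anticipate a serious obstacle here; the argument is essentially bookkeeping with supports under the product decomposition and the projection $\pi$. The one point requiring a little care is the inclusion $\supp(\pi(\D))\subseteq\pi(\supp(\D))$ and checking that a generating set of $\pi(\D)$ is the image of a generating set of $\D$, so that the union of supports behaves as claimed; this is immediate from linearity of $\pi$. A second minor point is handling the edge case $k_2=0$ (then $\C_2=0$, $\C_1=\C$, and the statement is the tautology $d_i(\C,0)=d_i(\C)$), which the argument above covers without change.
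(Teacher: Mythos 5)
Your proof is correct and follows essentially the same route as the paper's: the inequality $d_i(\C_1,\C_2)\leq d_i(\C)$ via the subcode $\D\times 0$, and the reverse inequality via the projection $\pi$ onto the first $n-k_2$ coordinates, using $\ker(\pi|_{\C_1})=\C_2$ and $\D\cap\C_2=0$ to preserve dimension. The only cosmetic difference is that you argue over all admissible subcodes of dimension at least $i$ and invoke monotonicity, while the paper works directly with a minimizing subcode of dimension exactly $i$; both are fine.
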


\begin{proof}
Let $i\in[k_1 - k_2]$. Since $\C_2 \subsetneq \C_1 \subseteq \mathbb{F}_q^n$, then $d_i(\C_1, \C_2)$ is well-defined. We will prove that $ d_i(\C_1,\C_2) \leq d_i(\C) $ and $ d_i(\C) \leq d_i(\C_1,\C_2) $.

First, let $ \D \subseteq \C $ such that $ |\supp(\D)| = d_i(\C) $ and $ \dim(\D) = i $. Define $ \D^\prime = \D \times 0$, where $0$ denotes the zero code in $\FF_q^{k_2}$. Clearly $ \D^\prime \subseteq \C_1 $, $ \D^\prime \cap \C_2 = 0 $, and $ \dim(\D^\prime) = i $. Hence
$$ d_i(\C_1,\C_2) \leq |\supp(\D^\prime)| = |\supp(\D)| = d_i(\C). $$

Next, take $ \D \subseteq \C_1 $ such that $ \D \cap \C_2 = 0 $, $ \dim(\D) = i $, and $ d_i(\C_1,\C_2) = |\supp(\D)| $. Consider the natural projection map $ \pi : \C_1 \longrightarrow \FF_q^{n - k_2} $ onto the first $ n-k_2 $ coordinates, and define $ \D^\prime = \pi(\D) $. Since $ \ker(\pi) = \C_2 $ and $ \D \cap \C_2 = 0 $, then $ \dim(\D^\prime) = \dim(\D) = i $. Moreover, since $ \C_1 = \C \times \FF_q^{k_2} $, we have that $ \D^\prime \subseteq \pi(\C_1) = \C $. Therefore,
\begin{equation*}
d_i(\C) \leq |\supp(\D^\prime)| \leq |\supp(\D)| = d_i(\C_1,\C_2)
\end{equation*}
for $i\in [k_1-k_2]$.
\end{proof}

As a consequence we can characterize the sequences of positive integers that are the sequence of relative generalized Hamming weights of a pair of nested codes.

\begin{thm}\label{thm:relativeHamming}
Any increasing sequence  of positive integers is the sequence of relative generalized Hamming weights of a pair of nested linear block codes. In addition, there exists a pair of nested codes of dimensions $k_1$ and $k_2$ in $\FF_q^n$ with relative generalized weights $d_1,\ldots,d_{k_1-k_2}$
provided that $0\leq k_2<k_1\leq n$ and $q\geq n-k_2\geq d_{k_1-k_2}$.
Moreover, the pair may always be chosen such that it satisfies the relative chain condition. In particular, any increasing sequence of positive integers is the sequence of relative greedy weights
of a pair of nested linear block codes.
\end{thm}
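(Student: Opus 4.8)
The plan is to combine the three results already established in this section. First I would invoke Theorem~\ref{th all seqs are ghws}: given an increasing sequence $d_1,\ldots,d_k$ of positive integers, set $n' = d_k$ and obtain a linear block code $\C \subseteq \FF_{q'}^{n'}$ with $d_r(\C) = d_r$ for $r \in [k]$, where $q' \geq n'$, and moreover $\C$ satisfies the chain condition. Next, given any $0 \leq k_2 < k_1$ with $k_1 - k_2 = k$ and any $n$ with $n - k_2 \geq d_k = n'$, I would pad $\C$ out to length $n - k_2$ (adding zero coordinates, which does not change the generalized weights by item (ii) of Remarks~\ref{rmks:Hamming}) so as to regard $\C$ as a $k$-dimensional subcode of $\FF_q^{n-k_2}$ with $q \geq n - k_2$. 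Then I would apply Lemma~\ref{lemma ghws are rghws} with this $\C$ to produce nested codes $\C_2 = 0 \times \FF_q^{k_2} \subsetneq \C_1 = \C \times \FF_q^{k_2} \subseteq \FF_q^n$ of dimensions $k_1$ and $k_2$ satisfying $d_i(\C_1,\C_2) = d_i(\C) = d_i$ for $i \in [k_1 - k_2]$. This establishes the first two sentences of the statement.

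The remaining work is the chain-condition part, and this is the step I expect to require the most care, since the notion of relative chain condition and relative greedy weights for block codes has not been spelled out in the excerpt and would need to be recalled (it is the evident relative analogue of Definition~\ref{def:greedySR} and the chain condition, with subcodes $\D_r$ required to satisfy $\D_r \cap \C_2 = 0$ and the relative analogue of Proposition~\ref{proposition:chaingreedy} holding). The key point is that the chain of subcodes witnessing the chain condition for $\C$ transports through the construction of Lemma~\ref{lemma ghws are rghws}. Concretely, if $0 \subsetneq \D_1 \subsetneq \cdots \subsetneq \D_k = \C$ is a chain in $\FF_q^{n-k_2}$ with $|\supp(\D_r)| = d_r(\C)$ for all $r$, then I would check that $0 \subsetneq \D_1 \times 0 \subsetneq \cdots \subsetneq \D_k \times 0 = \C \times 0 \subseteq \C_1$ is a chain of subcodes of $\C_1$, each meeting $\C_2$ trivially, with $|\supp(\D_r \times 0)| = |\supp(\D_r)| = d_r(\C) = d_i(\C_1,\C_2)$. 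This is exactly a chain realizing the relative generalized weights, so $(\C_1, \C_2)$ satisfies the relative chain condition, and by the relative analogue of Proposition~\ref{proposition:chaingreedy} the relative greedy weights of the pair equal its relative generalized weights, hence equal $d_1,\ldots,d_k$.

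The only genuine obstacle is bookkeeping: one must make sure the field-size and length hypotheses line up, i.e.\ that $q \geq n - k_2$ suffices for both the invocation of Theorem~\ref{th all seqs are ghws} (which needs the ambient length $d_k \leq q$, and indeed $d_k \leq n - k_2 \leq q$) and the subsequent padding. I would also note that the degenerate case $k_2 = 0$ reduces to Theorem~\ref{th all seqs are ghws} itself (with $\C_2 = 0$), so the statement is consistent. Beyond that, the proof is a direct assembly of Theorem~\ref{th all seqs are ghws} and Lemma~\ref{lemma ghws are rghws} together with the transport-of-chains observation above, and no new ideas are needed.
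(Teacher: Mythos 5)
Your proposal is correct and follows exactly the paper's route: the paper's proof is precisely "combine Theorem~\ref{th all seqs are ghws}, Lemma~\ref{lemma ghws are rghws}, and Proposition~\ref{proposition:chaingreedy}," and your fleshing-out of the chain-transport step and the field-size bookkeeping ($d_{k_1-k_2}\leq n-k_2\leq q$) matches the intended argument.
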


\begin{proof}
The result follows by combining Lemma~\ref{lemma ghws are rghws}, Theorem~\ref{th all seqs are ghws}, and Proposition~\ref{proposition:chaingreedy}.
\end{proof}

\section{Rank-metric and sum-rank-metric codes}\label{sect:sumrank}

In this section, we extend Theorem~\ref{th all seqs are ghws} to sum-rank-metric codes, and discuss in particular the case of rank-metric codes. We start by discussing a special situation in which the result can be easily proved. 
More precisely, we observe that any non-decreasing sequence of positive integers is the sequence of generalized weights of an $\FF_q$-linear rank-metric code for any $q$ and for large enough $m,n$.

\begin{thm}\label{thm:genwts_rk}
Fix a prime power $q$. Any non-decreasing sequence of positive integers is the sequence of generalized weights of a linear rank-metric code.
\end{thm}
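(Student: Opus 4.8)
The plan is to use the concrete description of optimal anticodes in the rank-metric case. Recall from the classification of optimal anticodes (\cite[Theorem III.11]{ourpaper}) that, when $m>n$, the optimal anticodes of $\FF_q^{m\times n}$ are exactly the spaces $\A_W=\{M\in\FF_q^{m\times n}:\rowsp(M)\subseteq W\}$, $W$ a subspace of $\FF_q^n$, and that $\maxsrk(\A_W)=\dim W$. Hence, for a rank-metric code $\C\subseteq\FF_q^{m\times n}$, one has $\wt(\C)=\dim\rowsp(\C)$, where $\rowsp(\C):=\sum_{M\in\C}\rowsp(M)\subseteq\FF_q^n$, and therefore $d_r(\C)=\min\{\dim\rowsp(\D):\D\subseteq\C,\ \dim\D\geq r\}$ for $r\in[\dim(\C)]$. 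The whole proof will be a direct construction designed to make this quantity match the prescribed sequence.

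Given a non-decreasing sequence $a_1\leq\cdots\leq a_k$ of positive integers, I would take $n=a_k$, $m=a_1+\cdots+a_k+1$ (so $m>n$), and $V_j=\langle e_1,\ldots,e_{a_j}\rangle\subseteq\FF_q^n$, so that $V_1\subseteq\cdots\subseteq V_k=\FF_q^n$. Partition the first $m-1$ row indices into consecutive blocks $R_1,\ldots,R_k$ with $|R_j|=a_j$, and let $N_j\in\FF_q^{m\times n}$ be the matrix that is zero outside the rows in $R_j$ and equal to $\left(I_{a_j}\mid 0\right)$ on those rows. Put $\C=\langle N_1,\ldots,N_k\rangle$. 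Since the $R_j$ are pairwise disjoint, the $N_j$ are $\FF_q$-linearly independent, so $\dim\C=k$; and since the nonzero rows of $N_j$ are $e_1,\ldots,e_{a_j}$, we have $\rowsp(N_j)=V_j$.

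Then I would compute the generalized weights. Identifying $\C$ with $\FF_q^k$ via $(\lambda_1,\ldots,\lambda_k)\mapsto\sum_j\lambda_j N_j$, the key observation is that $\rowsp\bigl(\sum_j\lambda_j N_j\bigr)=V_{j_0}$ with $j_0=\max\{j:\lambda_j\neq 0\}$, because the rows of the $N_j$ are scalar multiples of $e_1,e_2,\ldots$ and the $V_j$ are nested. Consequently, a subcode $\D\subseteq\C$ corresponding to a subspace $U\subseteq\FF_q^k$ has $\rowsp(\D)=V_{j^{\ast}(U)}$, where $j^{\ast}(U)=\max\{j:U\not\subseteq\{\lambda_j=0\}\}$, so $\wt(\D)=a_{j^{\ast}(U)}$. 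Since $U\subseteq\{\lambda_{j^{\ast}(U)+1}=\cdots=\lambda_k=0\}$, we get $j^{\ast}(U)\geq\dim U$, so any $\D$ with $\dim\D\geq r$ satisfies $\wt(\D)=a_{j^{\ast}(U)}\geq a_r$ by monotonicity of the sequence, with equality for $U=\langle e_1,\ldots,e_r\rangle$. Hence $d_r(\C)=a_r$ for all $r\in[k]$. The chain $\langle N_1,\ldots,N_r\rangle$ also shows that $\C$ satisfies the chain condition, so by Proposition~\ref{proposition:chaingreedy} the greedy weights of $\C$ coincide with its generalized weights.

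The step I expect to be most delicate is the lower bound $d_r(\C)\geq a_r$, whose whole force comes from choosing the geometry of the construction correctly: the blocks $R_j$ must be disjoint in the \emph{rows} (to keep the $N_j$ independent, hence $\dim\C=k$) but must share \emph{columns}, so that the row spaces $V_j$ nest rather than form a direct sum. A genuinely block-diagonal choice would instead give $\rowsp(\D)=\bigoplus_{j\in\supp(U)}V_j$ and thus $d_r(\C)=a_1+\cdots+a_r$, which is wrong. It is also worth recording why no constraint beyond positivity and monotonicity appears: since $m\geq k=\dim(\C)$, item~(5) of Proposition~\ref{properties} is vacuous (its index range $[\dim(\C)-\delta m]$ is empty for every $\delta\geq 1$), so the only necessary conditions are $1\leq a_1\leq\cdots\leq a_k\leq n$, which the construction meets. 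Everything above uses the description of optimal anticodes in the range $m>n$, which is why $m$ was chosen strictly larger than $n$.
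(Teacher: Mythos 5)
Your construction is the paper's own: your $N_j$ are exactly the matrices $C_r=\sum_{t=1}^{d_r}E_{d_1+\cdots+d_{r-1}+t,t}$ (pairwise disjoint row blocks carrying $(I_{d_r}\mid 0)$ in the first $d_r$ columns), and your computation of $d_r(\C)=a_r$ is correct. The only cosmetic difference is that you take $m$ one larger than $\sum_r a_r$ so as to invoke the classification of optimal anticodes and write $\wt(\D)=\dim\rowsp(\D)$, whereas the paper takes $m=\sum_r d_r$, gets the upper bound from the anticode supported on the first $d_r$ columns, and gets the lower bound directly from the observation that every codeword of rank less than $d_r$ lies in $\langle C_1,\ldots,C_{r-1}\rangle$.
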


\begin{proof}
Let $d_1,\dots,d_k$ be a non-decreasing sequence of positive integers. Consider the space $\FF_q^{m\times n}$, where $m=\sum_{r=1}^k d_r$ and $n=d_k$. We denote by $E_{i,j}$ the matrix in $\FF_q^{m\times n}$ whose entries are equal to zero, except for a one in position $(i,j)$. For $ r \in [k] $, we let
\begin{equation*}
C_r=\sum_{t=1}^{d_r}E_{d_1+\cdots+d_{r-1}+t,t} \in \FF_q^{m\times n}.
\end{equation*}
Let $\C\subseteq\FF_q^{m\times n}$ be the code generated by $C_1,\dots,C_k$. Then, we have that $\dim(\C)=k$ and $d_r(\C)=d_r$ for all $ r \in [k] $. Indeed, consider the optimal anticode $\A_r=\langle E_{i,j}\mid i \in [m], j \in [d_r] \rangle\subseteq\FF_q^{m\times n}$. Then,
\begin{equation*}
\dim(\C\cap\A_r)\geq \dim\langle C_1,\dots,C_r\rangle=r,
\end{equation*}
hence $d_r(\C)\leq d_r$. Since by construction any matrix in $ \C $ of rank strictly smaller than $ d_r $ is contained in $ \langle C_1 , \ldots, C_{r-1} \rangle $, we conclude that $d_r(\C)=d_r$. 
\end{proof}

In the rest of the section, we characterize the integer sequences which are the generalized weights of sum-rank and rank-metric subcodes of a given ambient space. We start by describing the sequence of generalized weights of the ambient space $\MM$. The next result is a direct consequence of the definition of generalized weights.

\begin{lemma}\label{wts_ambientspace}
The sequence of generalized weights of $\MM=\FF_q^{m_1\times n_1}\times\ldots\times\FF_q^{m_{t}\times n_{t}}$ is $$\underbrace{1,\ldots,1}_{m_1},\underbrace{2,\ldots,2}_{m_1},\ldots,\underbrace{n_1,\ldots,n_1}_{m_1},$$ $$\underbrace{n_1+1,\ldots,n_1+1}_{m_2},\underbrace{n_1+2,\ldots,n_1+2}_{m_2},\ldots,\underbrace{n_1+n_2,\ldots,n_1+n_2}_{m_2},$$
$$\vdots$$
$$\underbrace{n_1+\cdots+n_{t-1}+1,\ldots,n_1+\cdots+n_{t-1}+1}_{m_{t}},\ldots,\underbrace{n_1+\cdots+n_{t},\ldots,n_1+\cdots+n_{t}}_{m_{t}}.$$
A sequence of integers is a subsequence of this sequence if and only if it is a non-decreasing sequence in $[n_1+\cdots+n_{t}]$ such that the integer $n_1+\cdots+n_{i-1}+j$ appears at most $m_{i}$ times, for all $ i \in [t] $ and $ j \in [n_i] $. 
\end{lemma}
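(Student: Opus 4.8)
The plan is to unwind the definitions of $d_r$ and of the weight $\wt$ until the computation of $d_r(\MM)$ reduces to a small integer optimization problem, and then to solve that problem greedily. The key initial observation is that a product of optimal anticodes $\A=\A_1\times\ldots\times\A_t$ with $\A_i\subseteq\FF_q^{m_i\times n_i}$ is automatically a subcode of $\MM$, so that an $r$-dimensional subcode $\D\subseteq\MM$ with $\D\subseteq\A$ exists if and only if $\dim(\A)\geq r$. Interchanging the two nested minima in $d_r(\MM)=\min_{\D}\wt(\D)=\min_{\D}\min_{\A\supseteq\D}\maxsrk(\A)$ then gives
\[
d_r(\MM)=\min\{\maxsrk(\A):\A=\A_1\times\ldots\times\A_t,\ \A_i\subseteq\FF_q^{m_i\times n_i}\text{ an OAC},\ \dim(\A)\geq r\}.
\]

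Next I would parametrize: for such an $\A$, put $\delta_i=\maxrk(\A_i)\in\{0,1,\ldots,n_i\}$. By the definition of OAC together with the Anticode Bound (Theorem~\ref{theoremab}), $\dim(\A_i)=m_i\delta_i$, so $\dim(\A)=\sum_i m_i\delta_i$, while $\maxsrk(\A)=\sum_i\delta_i$ because ranks in distinct blocks may be chosen independently. Conversely, any tuple $(\delta_1,\ldots,\delta_t)$ with $0\leq\delta_i\leq n_i$ is realized by the product of the row-support anticodes $\A_i=\{M\in\FF_q^{m_i\times n_i}:\rowsp(M)\subseteq\langle e_1,\ldots,e_{\delta_i}\rangle\}$, using $m_i\geq n_i\geq\delta_i$. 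Hence
\[
d_r(\MM)=\min\Bigl\{\sum_{i=1}^t\delta_i\ :\ 0\leq\delta_i\leq n_i\text{ integers},\ \sum_{i=1}^t m_i\delta_i\geq r\Bigr\}.
\]
Because $m_1\geq m_2\geq\ldots\geq m_t$, for a fixed $s=\sum_i\delta_i$ the maximum of $\sum_i m_i\delta_i$ is the value $f(s)$ obtained by filling $\delta_1$ up to $n_1$, then $\delta_2$ up to $n_2$, and so on; the successive increments of $f$ are $m_1$ (repeated $n_1$ times), $m_2$ (repeated $n_2$ times), $\ldots$, which are non-increasing, so $f$ is concave with $f(n_1+\ldots+n_{i-1}+j)=m_1n_1+\ldots+m_{i-1}n_{i-1}+jm_i$. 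The minimum above is the least $s$ with $f(s)\geq r$, and comparing with the boundaries $m_1n_1+\ldots+m_{i-1}n_{i-1}+(j-1)m_i<r\leq m_1n_1+\ldots+m_{i-1}n_{i-1}+jm_i$ shows that this least $s$ equals $n_1+\ldots+n_{i-1}+j$ precisely on the block of positions assigned to the value $n_1+\ldots+n_{i-1}+j$. This is the displayed sequence.

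For the last assertion I would note that the values $n_1+\ldots+n_{i-1}+j$, as $i$ runs over $[t]$ and $j$ over $[n_i]$, are exactly $1,2,\ldots,n$, each occurring once; since the displayed sequence is non-decreasing, an integer sequence is a subsequence of it if and only if the corresponding multiset is contained in it, i.e.\ if and only if it is non-decreasing, takes values in $[n]$, and uses the value $n_1+\ldots+n_{i-1}+j$ at most $m_i$ times. I do not anticipate a real obstacle; the only point requiring care is the bookkeeping in the greedy step — checking that the increments of $f$ are non-increasing, which is exactly where $m_1\geq\ldots\geq m_t$ enters, and matching the index ranges for $r$ with the block boundaries of the displayed sequence.
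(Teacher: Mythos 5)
Your proof is correct: the paper gives no argument for this lemma (it is stated as ``a direct consequence of the definition of generalized weights''), and your unwinding of $d_r(\MM)$ into the optimization $\min\{\sum_i\delta_i:\sum_i m_i\delta_i\geq r,\ 0\leq\delta_i\leq n_i\}$, solved greedily using $m_1\geq\ldots\geq m_t$, is exactly the intended direct computation, carried out carefully (including the realization of each $\delta$-tuple by column-supported anticodes and the reduction of ``subsequence'' to multiset containment for non-decreasing sequences). No gaps worth flagging.
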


The next result provides a necessary condition for a numerical sequence to be the sequence of generalized weights of a sum-rank metric code.

\begin{lemma}[{\cite[Lemma VI.8]{ourpaper}}]\label{lemma:bound}
Let $\C\subseteq\MM=\FF_q^{m_1\times n_1}\times\ldots\times\FF_q^{m_{t}\times n_{t}}$ be a code and let $j\in[t]$, $r\in[\dim(\C)-m_j]$. If
$$d_{r+m_j}(\C)>\sum_{i=1}^{j-1}n_i,\text{  then  }d_{r+m_j}(\C)\geq d_r(\C)+1.$$
\end{lemma}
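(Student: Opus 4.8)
The plan is to prove the slightly stronger statement that there exists a subcode $\D' \subseteq \C$ with $\dim(\D') \geq r$ and $\wt(\D') \leq d_{r+m_j}(\C) - 1$; since $d_r(\C) \leq \wt(\D')$ by definition of the $r$-th generalized weight, this is exactly the claim. First I would fix a subcode $\D \subseteq \C$ with $\dim(\D) = r + m_j$ realizing $\wt(\D) = d_{r+m_j}(\C) =: d$. Such a $\D$ exists: take a weight-realizing subcode of dimension $\geq r + m_j$ and pass to any subcode of dimension exactly $r+m_j$ inside it; the weight of the smaller subcode is at most $d$ by monotonicity of $\wt$ under inclusion and at least $d$ by definition of $d_{r+m_j}(\C)$. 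We may also assume $d \geq 2$: if $j \geq 2$ this follows from the hypothesis $d > \sum_{i=1}^{j-1} n_i \geq 1$, while if $j = 1$ then $d = 1$ would force $\D$ inside a product of optimal anticodes of total maximum rank $1$, hence inside a space of dimension at most $m_1$, contradicting $\dim(\D) = r + m_1 > m_1$.

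Next I would locate a block in which there is room to shrink. By definition of $\wt$, choose an optimal anticode $\A = \A_1 \times \cdots \times \A_t$ with $\D \subseteq \A$ and $\maxsrk(\A) = \sum_{i=1}^t \maxsrk(\A_i) = d$. Every subspace of $\FF_q^{m_i \times n_i}$ consists of matrices of rank at most $n_i$, so $\maxsrk(\A_i) \leq n_i$; hence $\sum_{i=1}^{j-1} \maxsrk(\A_i) \leq \sum_{i=1}^{j-1} n_i < d = \sum_{i=1}^t \maxsrk(\A_i)$, which forces $\maxsrk(\A_{i_0}) \geq 1$ for some index $i_0 \geq j$. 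By the classification of optimal anticodes in $\FF_q^{m_{i_0} \times n_{i_0}}$ from \cite[Theorem III.11]{ourpaper}, $\A_{i_0}$ is, up to transposition (which occurs only when $m_{i_0} = n_{i_0}$), the space of all matrices whose row space is contained in a fixed subspace $W$ with $\dim(W) = \maxsrk(\A_{i_0}) \geq 1$. Choosing a hyperplane $W' \subsetneq W$ and letting $\A_{i_0}'$ be the matrices with row space in $W'$ gives an optimal anticode with $\maxsrk(\A_{i_0}') = \maxsrk(\A_{i_0}) - 1$, and $\A_{i_0}'$ is the kernel, inside $\A_{i_0}$, of an explicit linear map to $\FF_q^{m_{i_0}}$ (apply to each of the $m_{i_0}$ rows a functional on $W$ with kernel $W'$; in the transposed case apply it to each of the $n_{i_0} = m_{i_0}$ columns).

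Finally I would assemble $\A' = \A_1 \times \cdots \times \A_{i_0}' \times \cdots \times \A_t$ and set $\D' = \D \cap \A'$. Then $\A'$ is again a product of optimal anticodes with $\maxsrk(\A') = d - 1$, so $\wt(\D') \leq d - 1$. On the other hand, since $\D \subseteq \A$, a codeword of $\D$ lies in $\A'$ if and only if its $i_0$-th component lies in $\A_{i_0}'$; thus $\D'$ is the kernel of a linear map $\D \to \FF_q^{m_{i_0}}$, so $\dim(\D') \geq \dim(\D) - m_{i_0} = r + m_j - m_{i_0} \geq r$, the last inequality because $i_0 \geq j$ and $m_1 \geq \cdots \geq m_t$. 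Hence $d_r(\C) \leq \wt(\D') \leq d - 1 = d_{r+m_j}(\C) - 1$, as desired.

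The step I expect to be the crux is the second one: extracting, from the weight-realizing anticode, a block $i_0 \geq j$ whose factor $\A_{i_0}$ can be cut down to maximum rank one smaller by imposing at most $m_{i_0} \leq m_j$ linear conditions. This is precisely where the hypothesis $d_{r+m_j}(\C) > \sum_{i=1}^{j-1} n_i$ is needed (it is exactly what guarantees $i_0$ may be chosen $\geq j$, so that $m_{i_0} \leq m_j$) and where the structure theorem for optimal anticodes is essential; everything else is bookkeeping with dimensions of kernels. Minor points to check along the way are that passing to a subcode does not increase $\wt$, and that the degenerate factor $W' = 0$ (i.e.\ $\maxsrk(\A_{i_0}') = 0$) still counts as an optimal anticode, but neither causes any difficulty.
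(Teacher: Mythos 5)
Your argument is correct. Note that this paper does not prove the lemma at all---it is imported verbatim from \cite[Lemma~VI.8]{ourpaper}---so there is no in-house proof to compare against; your anticode-shrinking argument (fix a subcode $\D\subseteq\C$ of dimension exactly $r+m_j$ with $\wt(\D)=d_{r+m_j}(\C)$, use the hypothesis $d_{r+m_j}(\C)>\sum_{i=1}^{j-1}n_i$ to find a block $i_0\geq j$ in which the enclosing product of optimal anticodes has positive maximum rank, and cut that block down by one rank unit via the single-block classification of optimal anticodes, at the cost of at most $m_{i_0}\leq m_j$ dimensions) is sound and is the natural proof of this monotonicity bound, in the spirit of the cited reference. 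The supporting details all check out: the inequality $\maxsrk(\A_i)\leq n_i$ for the first $j-1$ blocks is exactly where the hypothesis enters, $\A_{i_0}'$ is again an optimal anticode (including the degenerate case $W'=0$), and the count $\dim(\D\cap\A')\geq r+m_j-m_{i_0}\geq r$ combined with $\wt(\D\cap\A')\leq \maxsrk(\A')=d_{r+m_j}(\C)-1$ yields the claim.
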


The previous two lemmas allow us to characterize the subsequences of the sequence of generalized weights of $\MM$.

\begin{lemma}\label{lemma:char_subs}
A non-decreasing sequence of positive integers $d_1,\dots,d_k$ with $d_k\leq n=n_1+\dots+n_{t}$ is a subsequence of the sequence of generalized weights of $\MM$ if and only if 
$$d_{r+m_j}>\sum_{i=1}^{j-1}n_i\text{  implies  }d_{r+m_j}\geq d_r+1,$$
for all $ j \in [t] $ and $ r \in [k - m_j] $. 
\end{lemma}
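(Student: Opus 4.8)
The plan is to prove the two implications separately, with the forward direction being essentially immediate from the results already established and the reverse direction requiring an explicit combinatorial argument.

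For the forward direction, suppose $d_1,\dots,d_k$ is a subsequence of the sequence of generalized weights of $\MM$. By Lemma~\ref{wts_ambientspace} it is the sequence of generalized weights of some subcode $\C\subseteq\MM$ (indeed, one may realize any such subsequence as the generalized weights of a suitable "coordinate" subcode of $\MM$; alternatively one argues directly on the description in Lemma~\ref{wts_ambientspace}). Then the claimed implication is exactly the statement of Lemma~\ref{lemma:bound} applied to $\C$. So this direction costs nothing beyond citing the two preceding lemmas.

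For the reverse direction, I would argue purely combinatorially using the characterization in the second half of Lemma~\ref{wts_ambientspace}: a non-decreasing sequence $d_1,\dots,d_k$ in $[n]$ is a subsequence of the generalized weights of $\MM$ if and only if, for every $i\in[t]$ and every $j\in[n_i]$, the value $v_{i,j}:=n_1+\dots+n_{i-1}+j$ occurs at most $m_i$ times among $d_1,\dots,d_k$. So assume $d_1,\dots,d_k$ satisfies the stated implication and suppose for contradiction that some value $v=v_{i,j}$ occurs at least $m_i+1$ times. Pick an index $r$ with $d_r=\dots=d_{r+m_i}=v$ (possible since there are $\geq m_i+1$ consecutive equal entries). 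Since $v=n_1+\dots+n_{i-1}+j>n_1+\dots+n_{i-1}$ and $r+m_i\le k$, i.e.\ $r\in[k-m_i]$, the hypothesis with $j$ replaced by $i$ gives $d_{r+m_i}\ge d_r+1$, i.e.\ $v\ge v+1$, a contradiction. Hence every value $v_{i,j}$ occurs at most $m_i$ times, and by Lemma~\ref{wts_ambientspace} the sequence is a subsequence of the generalized weights of $\MM$.

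The only point requiring a little care — and the main (minor) obstacle — is the bookkeeping in choosing the index $r$: one must check that a block of $m_i+1$ equal entries equal to $v$ really yields an admissible $r\in[k-m_i]$ so that Lemma~\ref{lemma:bound} (here, the combinatorial hypothesis) applies, and that the condition $d_{r+m_j}>\sum_{i=1}^{j-1}n_i$ is genuinely triggered — which it is, since $v>n_1+\dots+n_{i-1}$ by the very definition of $v_{i,j}$. I would also note explicitly that the hypothesis is vacuous for any $j$ with $m_j>k$, and that the non-decreasing assumption guarantees that the occurrences of a fixed value are consecutive, which is what makes the block-of-$(m_i+1)$ argument legitimate.
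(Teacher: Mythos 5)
Your reverse direction is correct and is essentially the paper's argument: the block-of-$(m_i+1)$-equal-entries contradiction shows each value $n_1+\cdots+n_{i-1}+j$ occurs at most $m_i$ times, and then the combinatorial characterization in Lemma~\ref{wts_ambientspace} finishes, using $d_k\leq n$. The problem is your forward direction. Your primary route asserts that, by Lemma~\ref{wts_ambientspace}, any subsequence of the generalized weights of $\MM$ \emph{is} the sequence of generalized weights of some subcode $\C\subseteq\MM$. Lemma~\ref{wts_ambientspace} says nothing of the sort: it only lists the weights of $\MM$ and characterizes which integer sequences are subsequences of that list. Worse, the realizability claim you invoke is false in general (over $\FF_2$ the paper's own examples show that, e.g., $n-1,n$ for $n\geq 4$, or $4,5,7$ in $\FF_2^7$, are subsequences of the weights of the ambient space but are not realized by any subcode), and where it does hold it is exactly the content of Theorem~\ref{thm:genwts_srk}, which requires the MSRD-chain hypothesis and whose proof (via Proposition~\ref{prop:subseqMSRD}) uses the very lemma you are proving --- so that route is both unsound and circular. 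A naive ``coordinate subcode'' will not do either: already in the Hamming case a code spanned by standard basis vectors has weights $1,2,\ldots,k$, not an arbitrary increasing sequence.

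The fix is cheap, and your parenthetical ``argue directly on the description in Lemma~\ref{wts_ambientspace}'' points the right way, but you must actually carry it out. Either do it as the paper does: apply Lemma~\ref{lemma:bound} to $\C=\MM$ itself, whose weights are known; since $d_1,\ldots,d_k$ is a subsequence of the weights of $\MM$, for each pair $(r,j)$ one can pick an index $h$ with $d_{h+m_j}(\MM)=d_{r+m_j}$ and $d_h(\MM)\geq d_r$, and then $d_{r+m_j}>\sum_{i=1}^{j-1}n_i$ forces $d_{r+m_j}=d_{h+m_j}(\MM)\geq d_h(\MM)+1\geq d_r+1$. Or argue purely combinatorially, mirroring your reverse direction: if $d_{r+m_j}=d_r=v>\sum_{i=1}^{j-1}n_i$, then $v$ occurs at least $m_j+1$ times, while $v=n_1+\cdots+n_{i_0-1}+\delta$ for some $i_0\geq j$, so by Lemma~\ref{wts_ambientspace} it occurs at most $m_{i_0}\leq m_j$ times (here you need the standing assumption $m_1\geq\cdots\geq m_t$), a contradiction. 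Either completion is a few lines; as written, the forward direction has a genuine gap.
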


\begin{proof}
If $d_1,\dots,d_{k}$ is a subsequence of the sequence of generalized weights of $\MM=\FF_q^{m_1\times n_1}\times\ldots\times\FF_q^{m_{t}\times n_{t}}$, then for any pair of positive integers $(r,j)$ such that $j\in[t]$ and $r\in[k-m_j]$, there exists an index $h$ such that $d_{h+m_j}(\MM)=d_{r+m_j}$. If $d_{r+m_j}>\sum_{i=1}^{j-1}n_i$, then
\begin{equation*}
    d_{h+m_j}(\MM)=d_{r+m_j}>\sum_{i=1}^{j-1}n_i.
\end{equation*}
Hence by Lemma~\ref{lemma:bound} we obtain
\begin{equation*}
    d_{h+m_j}(\MM)\geq d_h(\MM)+1\geq d_r+1,
\end{equation*}
where the last inequality follows from the fact that $d_1,\dots,d_{k}$ is a subsequence of the sequence of the generalized weights of $\MM$.

Suppose now that for each pair of positive integers $(r,j)$ such that $j\in[t]$ and $r\in[k-m_j]$ we have that 
$$d_{r+m_j}>\sum_{i=1}^{j-1}n_i\text{  implies  }d_{r+m_j}\geq d_r+1.$$
This implies that for any $j\in[t]$ and $\delta\in[n_j]$ we have at most $m_j$ elements in the sequence $d_1,\dots,d_{k}$ that are equal to $\sum_{i=1}^{j-1}n_i+\delta$. 
Together with $d_{k}\leq n_1+\dots+n_t$, we deduce that $d_1,\dots,d_{k}$ is a subsequence of the sequence of generalized weights of $\MM$ by Lemma~\ref{wts_ambientspace}.
\end{proof}

We give now a necessary condition for a sequence of integers to be the sequence of generalized weights of a subcode of an MSRD code. In Theorem~\ref{thm:genwts_subcodeMSRD}, we will show that such a condition is also sufficient, if we assume the existence of a suitable chain of MSRD codes. Notice that the case $ \D = \MM $ is precisely Lemma~\ref{lemma:bound}.

\begin{prop}\label{prop:subseqMSRD}
Let $\mathcal{C}\subseteq\mathcal{D}\subseteq\MM$ and assume that $\mathcal{D}$ is an MSRD code. Then the sequence of generalized weights of $\mathcal{C}$ is a subsequence of the sequence of generalized weights of $\mathcal{D}$.
\end{prop}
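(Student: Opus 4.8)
The plan is to exploit the fact, recorded in the theorem quoted from~\cite[Theorem VI.19]{ourpaper}, that the generalized weights of the MSRD code $\D$ are completely determined by its parameters $j,\delta$ and the ambient dimensions $m_i,n_i$. Combined with Proposition~\ref{properties}(3), which gives $d_r(\C)\geq d_r(\D)$, and Proposition~\ref{properties}(4), which gives $d_{\dim\C}(\C)\leq n$, we already know that the sequence $d_1(\C),\dots,d_{\dim\C}(\C)$ is a non-decreasing sequence of positive integers bounded above by $n$ and bounded below entrywise by the first $\dim(\C)$ entries of the weight sequence of $\D$. So the first move is to write $\C\subseteq\D$ and record these three constraints.

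The heart of the argument is then to verify the ``gap'' condition that characterizes subsequences. I would like to invoke Lemma~\ref{lemma:char_subs}, but that lemma characterizes subsequences of the generalized weights of the \emph{full} ambient $\MM$, whereas here I need subsequences of the weights of $\D$. The key observation is that the weights of $\D$ are themselves a subsequence of the weights of $\MM$ with an explicit, structured shape: by~\cite[Theorem VI.19]{ourpaper}, for each admissible value $h$ with $d_{\max}\le h\le n$ the value $h$ occurs exactly $m_k$ times (where $k=\max\{\nu\mid\sum_{i=1}^{\nu-1}n_i<h\}$), and the small values $1,\dots,d_{\max}-1$ do not occur at all. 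Thus ``being a subsequence of the weights of $\D$'' is equivalent to ``being a subsequence of the weights of $\MM$ whose entries all lie in $\{d_{\max},\dots,n\}$''. So the plan is: first apply Lemma~\ref{lemma:bound} to $\C$ itself (this is legitimate since $\C\subseteq\MM$) to get the gap condition $d_{r+m_j}(\C)>\sum_{i=1}^{j-1}n_i\Rightarrow d_{r+m_j}(\C)\ge d_r(\C)+1$; then apply Lemma~\ref{lemma:char_subs} (with the sequence being the weights of $\C$) to conclude the weights of $\C$ form a subsequence of the weights of $\MM$; then observe that every entry $d_r(\C)\ge d_1(\C)=d(\C)\ge d(\D)=d_{\max}$ by Proposition~\ref{properties}(1),(3), so all entries lie in $\{d_{\max},\dots,n\}$; finally check that a subsequence of the weights of $\MM$ lying in this range is automatically a subsequence of the weights of $\D$, using the explicit multiplicity description from~\cite[Theorem VI.19]{ourpaper}.

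The main obstacle I anticipate is the last step: matching multiplicities. A subsequence of the weights of $\MM$ with values in $\{d_{\max},\dots,n\}$ uses each value $\sum_{i=1}^{k-1}n_i+\delta$ at most $m_k$ times (by Lemma~\ref{wts_ambientspace}), and the weights of $\D$ contain each such value exactly $m_k$ times in that range, so the multiplicity bounds agree and the embedding as a subsequence goes through — but one has to be careful at the two ``boundary'' values $h=d_{\max}$ and near $h=n$, and to confirm that $\dim(\C)\le\dim(\D)$ is forced (it is, since $\C\subseteq\D$) so there are enough slots. A cleaner alternative, which I would use if the bookkeeping above gets heavy, is to argue directly from the parametrized formula: for each $r$ locate the block of the $\D$-weight sequence that $d_r(\D)$ sits in, use Proposition~\ref{properties}(3) to push $d_r(\C)$ up to at least that block, use the gap condition from Lemma~\ref{lemma:bound} applied within $\C$ to stop $d_{r+m_j}(\C)$ from lagging, and match these against the length-$m_k$ constant runs of the $\D$-sequence. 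Either way the engine is Lemma~\ref{lemma:bound} for the lower control and Proposition~\ref{properties}(3)--(4) plus the explicit shape of the $\D$-weights for the upper control; no new ideas beyond careful combinatorics on the two weight sequences should be needed.
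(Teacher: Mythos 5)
Your proposal is correct and follows essentially the same route as the paper: both reduce to showing the weights of $\C$ are a subsequence of the weights of $\MM$ via Lemma~\ref{lemma:bound} and Lemma~\ref{lemma:char_subs}, then use the fact that the weights of the MSRD code $\D$ are exactly the tail of the weights of $\MM$ (the paper cites this directly, you rederive it from the multiplicity description in~\cite[Theorem~VI.19]{ourpaper}) together with monotonicity ($d_r(\C)\geq d_r(\D)$, or your variant $d_r(\C)\geq d_1(\C)\geq d(\D)$) to land the subsequence inside the weights of $\D$. The multiplicity bookkeeping you were worried about does go through exactly as you sketched, so no further work is needed.
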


\begin{proof}
By~\cite[Theorem VII.14]{ourpaper} the sequence of generalized weights of $\mathcal{D}$ is the subsequence consisting of the last $\dim(\mathcal{D})$ generalized weights of $\MM$ (see also~\cite[Remark VII.15]{ourpaper}).
By Lemma~\ref{lemma:bound} and Lemma~\ref{lemma:char_subs}, the sequence of generalized weights of $\mathcal{C}$ is a subsequence of the sequence of generalized weights of $\MM$, which is described in Lemma~\ref{wts_ambientspace}. The thesis now follows from the previous two facts and the fact that $d_r(\mathcal{C})\geq d_r(\mathcal{D})$ for $r\in[\dim(\C)]$.
\end{proof}

The next example shows that the conclusion of Proposition~\ref{prop:subseqMSRD} does not necessarily hold for an arbitrary code $\mathcal{D}$.

\begin{ex}\label{ex:nonMSRD}
For $q>2$, let $\alpha\in\FF_q\setminus\{0,1\}$ and let $$\mathcal{D}=\langle (1,1,0,0,0,0),(0,0,1,1,1,0),(0,0,0,\alpha,1,1)\rangle\subseteq \FF_q^6.$$ The generalized weights of $\mathcal{D}$ are $2,4,6$. 
However, $$\C=\langle (1,1,0,0,0,0),(0,0,1,1,1,0)\rangle\subseteq\D$$ has generalized weights $2,5$.
\end{ex}

Our main result is a characterization of the integer sequences that are the sequence of generalized weights of a sum-rank-metric code. We prove our result under the assumption that a suitable chain of MSRD codes exists. Such a chain exists for many choices of parameters, some of which are described in Proposition~\ref{existenceMSRD} and Proposition~\ref{existenceMSRDII}. In particular, it exists in the Hamming metric (corresponding to $m_i=n_i=1$ for all $i\in[t]$) whenever $q \geq n$. In fact, if $q\geq n$, then Reed-Solomon codes form such a chain, see also Remark~\ref{rmks:Hamming} ii). Hence the next theorem may be regarded as a generalization of Theorem~\ref{th all seqs are ghws}. 

\begin{thm}\label{thm:genwts_subcodeMSRD}
Let $\D\subseteq\MM$ be an MSRD code of minimum distance $d$ and assume that there exists a chain of MSRD codes $\D=\D_d\supseteq \D_{d+1}\supseteq \ldots\supseteq \D_n$ such that $d(\D_h)=h$, for $d\leq h\leq n$.
A sequence of positive integers $d_1,\ldots,d_k$ is the sequence of generalized sum-rank weights of a code $\C \subseteq \D$ if and only if $k\leq \dim(\D)$ and $d_1,\ldots,d_k$ is a subsequence of the sequence of generalized weights of $\D$. Moreover, the code $ \C $ may always be chosen such that it satisfies the chain condition. In particular, any subsequence of the sequence of generalized sum-rank weights of $\D$ is the sequence of greedy weights of a subcode of $\D$.
\end{thm}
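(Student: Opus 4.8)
The plan is to prove both directions, with the ``only if'' direction following quickly from the results already established, and the ``if'' direction by an induction analogous to the proof of Theorem~\ref{th all seqs are ghws}, but now taking Reed--Solomon codes to be replaced by the members of the given chain $\D_d\supseteq\ldots\supseteq\D_n$ of MSRD codes. For the necessity: if $\C\subseteq\D$ has generalized weights $d_1,\ldots,d_k$, then $k=\dim(\C)\leq\dim(\D)$, and by Proposition~\ref{prop:subseqMSRD} the sequence $d_1,\ldots,d_k$ is a subsequence of the sequence of generalized weights of $\D$. For the greedy-weights assertion at the end, once we have constructed a $\C$ realizing the given subsequence and satisfying the chain condition, Proposition~\ref{proposition:chaingreedy} immediately gives $g_r(\C)=d_r(\C)=d_r$ for all $r\in[k]$.

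For sufficiency, I would first reduce to the case $d_k=n$, i.e.\ the last weight equals the maximum possible value $n$. In the general case, the given subsequence $d_1,\ldots,d_k$ of the generalized weights of $\D$ is also a subsequence of the generalized weights of $\D_{d_k}$ (the member of the chain with minimum distance $d_k$): indeed $\D_{d_k}\subseteq\D$ is MSRD, so by~\cite[Theorem VII.14]{ourpaper} its generalized weights are the last $\dim(\D_{d_k})$ generalized weights of $\MM$, which still contains $d_1,\ldots,d_k$ provided the last weight of $\D_{d_k}$, namely $n$, is at least $d_k$ — which is automatic — and provided $k\le\dim(\D_{d_k})$, which must be checked. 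Once inside $\D_{d_k}$, the last weight of the ambient MSRD code equals $d_k$, so after this reduction I may assume the last generalized weight of the chosen ambient MSRD code equals $d_k$. So assume now $d_k=n$, which is the last weight of $\D=\D_d$, and note that since $d_1,\ldots,d_k$ is a subsequence of the generalized weights of $\D$, necessarily $d_1\geq d$.

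Now I would run an induction mirroring Theorem~\ref{th all seqs are ghws}. Write $d_1,\ldots,d_k$ as a juxtaposition of $\ell$ maximal blocks in which the weight is constant on each block, grouping according to where the ``jump condition'' of Lemma~\ref{lemma:char_subs} forces an increase. The cleanest bookkeeping is probably to split off, at the top, a subcode $\E$ of $\D_{d_1}$ realizing the initial block of weights $d_1,\ldots,d_1+a_1-1$ (or more precisely the portion of the sequence with value in the ``first column'' range), built as an intersection of $\D_{d_1}$ with a coordinate subspace of $\MM$ in the spirit of the $\E_i$'s in the Hamming proof: here $\E_i=\D_{d_1}\cap(\text{an optimal anticode of weight }d_1+i-1)$, which by the anticode/Singleton machinery is forced to have dimension exactly as needed and weight exactly $d_1+i-1$. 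Inductively one obtains $\D'\subseteq\D_{d_{a_1+1}}$ realizing the tail of the sequence together with a witnessing chain of subcodes with prescribed weights, and sets $\C=\D'+\E$. The key points to verify are: (i) $\D'\cap\E=0$, which follows because $d(\D')$ exceeds the weight of $\E$; (ii) the first $a_1$ generalized weights of $\C$ are squeezed between those of $\D_{d_1}$ and those of $\E$ via $\E\subseteq\C\subseteq\D_{d_1}$ and monotonicity, Proposition~\ref{properties}(3); (iii) the remaining weights of $\C$ equal those of $\D'$, with the lower bound coming from the fact that any $(i+a_1)$-dimensional subcode of $\C=\D'\oplus\E$ meets $\D'$ in dimension at least $i$, and the upper bound from $\D'_i+\E$; and (iv) stacking the chain for $\E$ with the chain $\D'_i+\E$ yields a chain witnessing the chain condition for $\C$, so Proposition~\ref{proposition:chaingreedy} applies.

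The main obstacle I anticipate is the correct combinatorial organization of the ``blocks'': in the sum-rank setting the jump condition of Lemma~\ref{lemma:char_subs} involves the parameters $m_j$ and the partial sums $n_1+\cdots+n_{j-1}$, so the blocks are no longer simply ``runs of consecutive integers'' as in the Hamming case but ``runs respecting the staircase of $\MM$''. Correspondingly, the subcode $\E$ peeled off at the top is not an MDS code inside a shorter block code but an intersection of an MSRD code with an optimal anticode of $\MM$, and one must argue — using the Anticode Bound (Theorem~\ref{theoremab}), the classification of optimal anticodes, and the fact that $\D_{d_1}$ is MSRD with its generalized weights known explicitly — that this intersection has exactly the dimension and exactly the weight prescribed by the first block of the target sequence, and moreover that it carries an internal chain realizing the first block of weights. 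Once that ``base'' lemma is in place, the inductive splicing $\C=\D'+\E$ and the verification of the generalized weights and of the chain condition go through essentially as in the proof of Theorem~\ref{th all seqs are ghws}, with Proposition~\ref{properties} and Proposition~\ref{prop:subseqMSRD} doing the bookkeeping.
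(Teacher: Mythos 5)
Your overall skeleton is the right one and matches the paper's strategy (the paper builds $\C$ directly as a sum of one piece per maximal constant block of the sequence, each piece chosen inside $\D_{d_{a_h}}$ intersected with the optimal anticode on the first $d_{a_h}$ columns; your induction, unrolled, is exactly that sum), and your necessity argument, the directness criterion, the squeeze/lower-bound verification of the weights, and the appeal to Proposition~\ref{proposition:chaingreedy} are all sound. The genuine gap is precisely the step you yourself defer as ``the main obstacle'': the block lemma, which is where the MSRD-chain hypothesis actually enters and where the content of the theorem lies. A maximal block is a constant run, say the value $d_{a_h}=\sum_{i=1}^{j-1}n_i+\delta+1$ repeated $b_h$ times, and what is needed is a $b_h$-dimensional subspace of $\D_{d_{a_h}}\cap\A$, where $\A$ is the optimal anticode supported on the first $d_{a_h}$ columns. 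This requires two facts you never establish: first, $b_h\leq m_j$, which follows from the subsequence hypothesis via Proposition~\ref{prop:subseqMSRD} and Lemma~\ref{wts_ambientspace}; second, $\dim(\A\cap\D_{d_{a_h}})\geq\dim(\A)+\dim(\D_{d_{a_h}})-\dim(\MM)=m_j$, after which any nonzero element of the chosen subspace has sum-rank exactly $d_{a_h}$ (it lies in $\A$, so $\leq d_{a_h}$, and in $\D_{d_{a_h}}$, so $\geq d_{a_h}$) --- this last point is also what powers your directness claim $d(\D')>\maxsrk(\E)$ and the lower bounds. Your literal recipe ``$\E_i=\D_{d_1}\cap(\text{anticode of maximum sum-rank }d_1+i-1)$ has dimension exactly $i$ and weight exactly $d_1+i-1$'' imports the Hamming picture and is false in the sum-rank setting: such an intersection grows in dimension by roughly $m_j$ per added column, not by $1$, and for a constant block one wants a single anticode and a subspace of prescribed (possibly smaller) dimension inside the intersection, not a growing family with increasing weights. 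So the crucial quantitative step is missing, and as sketched it is organized incorrectly.

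A second, separable problem is your ``reduction to $d_k=n$''. It rests on a false claim: every generalized weight of $\D_{d_k}$ is at least its minimum distance $d_k$, so your sequence (whose earlier entries are strictly smaller than $d_k$ unless the sequence is constant) is in general \emph{not} a subsequence of the generalized weights of $\D_{d_k}$; moreover the largest generalized weight of $\D_{d_k}$ is still $n$, not $d_k$, so the step would not achieve its stated purpose in any case. Fortunately it is also unnecessary --- the construction simply yields a code supported on the first $d_k$ columns, and nothing later uses $d_k=n$ --- so you should delete it. With the block lemma supplied, your splicing goes through essentially as in the paper: for the first block use that the first $m_j$ generalized weights of the MSRD code $\D_{d_1}$ all equal $d_1$ (so $d_1\leq d_i(\C)\leq d_i(\E)=d_1$ for $i\leq b_1$), and for the upper bounds and the chain condition make sure your inductive witnesses are required to lie in the anticodes supported on initial columns (the sum-rank analogue of $\supp(\D_i)=[d_{i+a_1}]$), since knowing only their weights does not control $\wt(\D'_i+\E)$.
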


\begin{proof}
Necessity follows from Proposition~\ref{prop:subseqMSRD}. We now prove sufficiency:
A sequence of integers as in the statement of the theorem can be uniquely written as the juxtaposition of $\ell$ maximal constant subsequences of length $a_2-1,a_3-a_2,\dots,a_{\ell+1}-a_{\ell}$ as
\begin{equation}\label{eqn:sequence}
    d_1,\ldots,d_1,d_{a_2},\ldots,d_{a_2},\ldots,d_{a_{\ell}},\ldots,d_{a_{\ell}},
\end{equation}
for some $1=a_1<a_2<\ldots<a_{\ell}\leq k$ and $d_1<d_{a_2}<\ldots<d_{a_\ell}\leq n$, and where we set $ a_{\ell+1} = k+1 $.
By assumption, the sequence (\ref{eqn:sequence}) is a subsequence of the sequence of generalized weights of $\mathcal{D}$. For each $ h \in [\ell] $ there exist $j\in [t]$ and $0\leq\delta\leq n_j-1$ such that $d_{a_h}=\sum_{i=1}^{j-1} n_i +\delta+1$. By Proposition~\ref{prop:subseqMSRD} and Lemma~\ref{wts_ambientspace}, this implies that $a_{h+1}-a_h\leq m_j$. Since $\D_{d_{a_h}}$ has dimension $\sum_{i=j}^t m_in_i-\delta m_j$, then there exists a subcode $\C_h \subseteq \D_{d_{a_h}}$ of dimension $a_{h+1}-a_{h}$ supported on the first $d_{a_h}$ columns. In fact, let $\A$ be the optimal anticode supported on the first $d_{a_h}$ columns. Then $\dim(\A)=\sum_{i=1}^{j-1} n_im_i +(\delta+1)m_j$, hence $\dim(\A\cap\D_{d_{a_h}})=\dim(\A)+\dim(\D_{d_{a_h}})-\dim(\A+\D_{d_{a_h}})\geq m_j$ and one can choose $\C_h\subseteq\A\cap\D_{d_{a_h}}$. Notice that every nonzero element of $\C_h$ has sum-rank $d_{a_h}$. In addition, $\maxsrk(\sum_{i=1}^{h-1}\C_i)\leq d_{a_{h-1}}<d_{a_h}=d(\C_h)$, hence $(\C_1+\cdots+\C_{h-1})\cap\C_h=0$.
    
Let $\C=\C_1+\cdots+\C_{\ell}$. Then $\C\subseteq\D_{d_1}\subseteq\D$ and $$\dim(\C)=\sum_{h=1}^{\ell}\dim(\C_h)=\sum_{h=1}^{\ell}(a_{h+1} - a_h)=k.$$ To show that (\ref{eqn:sequence}) is the sequence of generalized weights of $\C$, we need to prove that $d_{r}(\C)=d_{a_h}$ for every $h\in [\ell]$ and $r$ such that $a_h\leq r<a_{h+1}$. As before, let $ \A $ be the optimal anticode supported on the first $ d_{a_h} $ columns. By construction, $ \C_1 + \cdots + \C_h \subseteq \A \cap \C $, hence $ \dim(\A \cap \C) \geq a_{h+1}-1 \geq r $, thus $ d_{r}(\C) \leq \maxsrk(\A) = d_{a_h} $. Now, let $\A^\prime$ be an optimal anticode such that $d_r(\C)=\wt(\mathcal{U}_r)=\maxsrk(\A^\prime)$, where $\mathcal{U}_r\subseteq\A^\prime\cap\C$ and $\dim(\mathcal{U}_r)\geq r$. Since $\dim(\A^\prime\cap\C)\geq r\geq a_h$ and $\dim(\C_1+\dots+\C_{h-1})=a_{h}-1$, we have that $\A^\prime\cap(\C_h+\dots+\C_{\ell})\neq 0$. It follows that $d_r(\C)=\maxsrk(\A^\prime)\geq d(\C_h+\dots+\C_{\ell})\geq d_{a_h}$, proving that $d_r(\C)=d_{a_h}$.

Finally, to see that $ \C $ satisfies the chain condition, let $\mathcal{U}_0=0$ and let $ \A_h $ be the optimal anticode supported on the first $ d_{a_h} $ columns for $ h \in [\ell] $. For $a_h\leq r<a_{h+1}$, let $ \mathcal{U}_r \subseteq \A_h \cap \C $ be an $r$-dimensional subspace that contains $\mathcal{U}_{r-1}$. Notice that $\wt(\mathcal{U}_r)\leq d_{a_h}$ since $\mathcal{U}_r\subseteq\A_h$ and $\wt(\mathcal{U}_r)\geq d_r(\C)=d_{a_h}$ since $\mathcal{U}_r$ is an $r$-dimensional subcode of $\C$. Therefore $\wt(\mathcal{U}_r)=d_{a_h}=d_r(\C)$ and the chain $0\subsetneq\mathcal{U}_1\subsetneq\mathcal{U}_2\subsetneq\ldots\subsetneq\mathcal{U}_k=\C$ has the required properties. Since $\C$ satisfies the chain condition, then its greedy weights coincide with its generalized weights by Proposition~\ref{proposition:chaingreedy}. 
\end{proof}

Example~\ref{ex:nonMSRD} also shows that Theorem~\ref{thm:genwts_subcodeMSRD} may fail, if the ambient code is not MSRD.

\begin{ex}
For $q>2$, let $\alpha\in\FF_q\setminus\{0,1\}$ and let $$\mathcal{D}=\langle (1,1,0,0,0,0),(0,0,1,1,1,0),(0,0,0,\alpha,1,1)\rangle\subseteq \FF_q^6.$$ The generalized weights of $\mathcal{D}$ are $2,4,6$. 
However, there is no subcode $\C\subseteq\mathcal{D}$ whose generalized weights are equal to $2,4$.
\end{ex}

\begin{rmk}
The assumption on the sequence of nested MSRD codes is necessary for Theorem~\ref{thm:genwts_subcodeMSRD} to hold, in the following sense. Suppose that every MSRD code has the property that every subsequence of its generalized weights is realized by one of its subcodes. Consider an MSRD code $\mathcal{D}=\mathcal{D}_d$ with $d_1(\mathcal{D})=d$ and the subsequence of its generalized weights consisting of all the generalized weights which are different from $d$. Let $\mathcal{D}_{d+1}$ be a subcode of $\mathcal{D}$ which realizes this subsequence. Then $\mathcal{D}_{d+1}$ is MSRD and we consider the subsequence of its generalized weights consisting of all the generalized weights which are different from $d+1$. By assumption, there is a subcode $\mathcal{D}_{d+2}$ of $\mathcal{D}_{d+1}$ which realizes this subsequence of generalized weights. Proceeding in this fashion, we obtain a sequence of nested MSRD codes as in the statement of Theorem~\ref{thm:genwts_subcodeMSRD}.
\end{rmk}

Theorem~\ref{thm:genwts_subcodeMSRD} allows us to characterize the integer sequences which are the sequence of generalized weights of a sum-rank metric code as follows. 

\begin{thm}\label{thm:genwts_srk}
Assume that there exists a chain of MSRD codes $\MM=\D_1\supseteq \D_2\supseteq \ldots\supseteq \D_n$ such that $d(\D_h)=h$ for $ h \in [n] $.
A sequence of positive integers $d_1,\ldots,d_k$ is the sequence of generalized sum-rank weights of a code $\C \subseteq \MM$ if and only if $d_1,\ldots,d_k$ is a non-decreasing sequence in $[n]$ such that the integer $n_1+\cdots+n_{i-1}+j$ appears in it at most $m_{i}$ times, for all $ i \in [t] $ and $ j \in [n_i] $. 
Moreover, the code $ \C $ may always be chosen such that it satisfies the chain condition. In particular, any subsequence of the sequence of generalized sum-rank weights of $\MM$ is the sequence of greedy weights of a sum-rank metric code $\C\subseteq\MM$.
\end{thm}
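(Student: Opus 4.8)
The plan is to derive Theorem~\ref{thm:genwts_srk} as the special case $\D=\MM$ of Theorem~\ref{thm:genwts_subcodeMSRD}, since $\MM$ is itself an MSRD code (it is the whole ambient space, which attains the Singleton Bound trivially with $j=1$, $\delta=0$, $d(\MM)=1$, $\dim(\MM)=\sum_i m_in_i$). The hypothesis of Theorem~\ref{thm:genwts_subcodeMSRD} asks for a chain of MSRD codes $\MM=\D_1\supseteq\D_2\supseteq\ldots\supseteq\D_n$ with $d(\D_h)=h$, which is exactly the chain assumed here, so that theorem applies verbatim with $\D=\MM$ and $k\leq\dim(\MM)$ automatic.

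The one point requiring a small translation is the description of the admissible sequences. Theorem~\ref{thm:genwts_subcodeMSRD} phrases the condition as ``$d_1,\ldots,d_k$ is a subsequence of the sequence of generalized weights of $\D$'', and here $\D=\MM$, so I would simply invoke Lemma~\ref{wts_ambientspace}, whose second sentence already states that a sequence of integers is a subsequence of the sequence of generalized weights of $\MM$ if and only if it is a non-decreasing sequence in $[n_1+\cdots+n_t]$ in which the integer $n_1+\cdots+n_{i-1}+j$ appears at most $m_i$ times for all $i\in[t]$, $j\in[n_i]$. That is precisely the condition in the statement of Theorem~\ref{thm:genwts_srk}, so the equivalence is immediate. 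The ``moreover'' and ``in particular'' clauses — that $\C$ can be chosen satisfying the chain condition, and hence by Proposition~\ref{proposition:chaingreedy} that any such subsequence is the sequence of greedy weights of some $\C\subseteq\MM$ — are inherited directly from the corresponding clauses of Theorem~\ref{thm:genwts_subcodeMSRD}.

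So the proof is essentially a three-line citation: apply Theorem~\ref{thm:genwts_subcodeMSRD} with $\D=\MM$, rewrite ``subsequence of the generalized weights of $\MM$'' using Lemma~\ref{wts_ambientspace}, and read off the chain-condition and greedy-weight consequences. I do not expect any genuine obstacle here; the only thing to be slightly careful about is confirming that $\MM$ genuinely qualifies as an MSRD code in the sense of the definition given (take $j=1$, $\delta=0$), so that Theorem~\ref{thm:genwts_subcodeMSRD} is applicable without modification, and that the indexing in the chain hypothesis ($d\leq h\leq n$ with $d=d(\MM)=1$) matches ``$h\in[n]$'' as written. Both are routine.

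\begin{proof}
The ambient space $\MM=\FF_q^{m_1\times n_1}\times\ldots\times\FF_q^{m_t\times n_t}$ is an MSRD code: taking $j=1$ and $\delta=0$ we have $d(\MM)=1=\sum_{i=1}^{0}n_i+0+1$ and $\dim(\MM)=\sum_{i=1}^t m_in_i=\sum_{i=1}^t m_in_i-0\cdot m_1$. Hence, with $\D=\MM$ and $d=d(\MM)=1$, the chain $\MM=\D_1\supseteq\D_2\supseteq\ldots\supseteq\D_n$ with $d(\D_h)=h$ for $h\in[n]$ is exactly a chain of MSRD codes as required in Theorem~\ref{thm:genwts_subcodeMSRD}. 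Since $k\leq\dim(\MM)$ holds for any subcode $\C\subseteq\MM$, Theorem~\ref{thm:genwts_subcodeMSRD} applies and gives that a sequence of positive integers $d_1,\ldots,d_k$ is the sequence of generalized sum-rank weights of a code $\C\subseteq\MM$ if and only if it is a subsequence of the sequence of generalized weights of $\MM$, and that in this case $\C$ may be chosen to satisfy the chain condition. By the second part of Lemma~\ref{wts_ambientspace}, a sequence of integers is a subsequence of the sequence of generalized weights of $\MM$ if and only if it is a non-decreasing sequence in $[n_1+\cdots+n_t]=[n]$ in which the integer $n_1+\cdots+n_{i-1}+j$ appears at most $m_i$ times for all $i\in[t]$ and $j\in[n_i]$. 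This proves the stated characterization. Finally, since $\C$ may be chosen to satisfy the chain condition, Proposition~\ref{proposition:chaingreedy} yields $g_r(\C)=d_r(\C)$ for $r\in[k]$, so every such subsequence is also the sequence of greedy weights of a sum-rank metric code $\C\subseteq\MM$.
\end{proof}
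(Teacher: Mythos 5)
Your proposal is correct and matches the paper's route exactly: the paper states Theorem~\ref{thm:genwts_srk} as an immediate consequence of Theorem~\ref{thm:genwts_subcodeMSRD} applied with $\D=\MM$ (which is MSRD with $j=1$, $\delta=0$), translated via Lemma~\ref{wts_ambientspace} and combined with Proposition~\ref{proposition:chaingreedy}. Your extra check that the ambient space satisfies the MSRD definition and that the bound $k\leq\dim(\MM)$ is automatic is a welcome bit of care, but introduces nothing different from the paper's argument.
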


\begin{rmk}
The assumption that MSRD codes exist for every choice of parameters is necessary for Theorem~\ref{thm:genwts_srk} to hold, in the following sense. If it is true that every subsequence of the sequence of generalized weights of $\MM$ is the sequence of generalized weights of a code $\C\subseteq\MM$, then for $i\in[t]$ and $j\in[n_i]$ consider the subsequence consisting of all generalized weights of $\MM$ which are bigger than or equal to $n_1+\cdots+n_{i-1}+j$. That is the subsequence
$$\underbrace{n_1+\cdots+n_{i-1}+j,\ldots,n_1+\cdots+n_{i-1}+j}_{m_i},\ldots,\underbrace{n_1+\cdots+n_i,\ldots,n_1+\cdots+n_i}_{m_i},$$
$$\underbrace{n_1+\cdots+n_{i}+1,\ldots,n_1+\cdots+n_{i}+1}_{m_{i+1}},\ldots,\underbrace{n_1+\cdots+n_{i+1},\ldots,n_1+\cdots+n_{i+1}}_{m_{i+1}},$$
$$\vdots$$
$$\underbrace{n_1+\cdots+n_{t-1}+1,\ldots,n_1+\cdots+n_{t-1}+1}_{m_{t}},\ldots,\underbrace{n_1+\cdots+n_{t},\ldots,n_1+\cdots+n_{t}}_{m_{t}}.$$
A code $\C\subseteq\MM$ with those generalized weights is an MSRD code with minimum distance $n_1+\cdots+n_{i-1}+j$. This proves that, if the conclusion of Theorem~\ref{thm:genwts_srk} holds, then MSRD codes exists for every possible choice of minimum distance. 
\end{rmk}

In Theorem~\ref{thm:genwts_srk} we assume the existence of a maximal chain of MSRD codes. This is a stronger assumption than just assuming the existence of MSRD codes for every choice of parameters. The next example shows that it is possible that MSRD codes exist for every choice of parameters, but no maximal chain of MSRD codes exists. In such a situation, the example also shows that it is possible for the conclusion of Theorem~\ref{thm:genwts_srk} to hold, that is, every subsequence of the sequence of generalized weights of $\MM$ is the sequence of generalized weights of a code in $\MM$.

\begin{ex}
Let $q=2$, $t=3$, and $n_i=m_i=1$ for $i\in[3]$, that is $\MM=\FF_2^3$.
The only MDS code in $\mathbb{F}_2^3$ with minimum distance $2$ is the even-weight code. The only MDS code in $\FF_2^3$ with minimum distance $3$ is $\langle(1,1,1)\rangle$. Therefore, $\mathbb{F}_2^3$ contains MSD codes with every possible minimum distance, however it does not contain a chain of nested MDS codes with minimum distances $2$ and $3$. While Theorem~\ref{thm:genwts_srk} does not apply in this situation, it is easy to check by direct inspection that for every subsequence of the sequence $1,2,3$ of generalized weights of $\mathbb{F}_2^3$ there exists a code $\C\subseteq\mathbb{F}_2^3$ whose generalized weights coincide with the chosen subsequence.
\end{ex}

Finally, we state Theorem~\ref{thm:genwts_srk} in the generality of rank-metric codes. This provides a characterization of the integer sequences that are the sequence of generalized weights of a rank-metric subcode of $\FF_q^{m\times n}$ for given $q,n,m$. Notice that Gabidulin codes~\cite{Del, gabidulin} form a chain of nested MRD codes of minimum rank distances $1,2,\ldots,n$ for any $m,n$, and $q$ with $n\leq m$. While several other families of MRD codes are known, see e.g.~\cite{sheekey}, the existence of Gabidulin codes suffices for our purposes.

\begin{thm}\label{thm:genwts_rk_mn}
Fix $1\leq n\leq m$ and $q$ a prime power. A non-decreasing sequence of positive integers $d_1,\ldots,d_k$ is the sequence of generalized rank weights of a linear subcode of $\FF_q^{m\times n}$ if and only if $k\leq mn$, $d_k\leq n$, and any constant subsequence has length at most $m$. 
\end{thm}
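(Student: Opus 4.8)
The plan is to obtain this statement as the special case $t=1$ of Theorem~\ref{thm:genwts_srk}. In the notation of Section~\ref{Sec:not}, setting $t=1$, $m_1=m$, $n_1=n$ gives $\MM=\FF_q^{m\times n}$; a rank-metric code is then precisely a sum-rank metric code in this $\MM$, its generalized rank weights are its generalized sum-rank weights, and an MSRD code is precisely an MRD code. The only hypothesis of Theorem~\ref{thm:genwts_srk} that needs to be verified is the existence of a chain of MSRD codes $\FF_q^{m\times n}=\D_1\supseteq\D_2\supseteq\cdots\supseteq\D_n$ with $d(\D_h)=h$ for $h\in[n]$. First I would recall that Gabidulin codes~\cite{Del,gabidulin} furnish, for every prime power $q$ and every $1\le n\le m$, a nested family of MRD codes of minimum rank distances $1,2,\ldots,n$; this supplies the required chain with no restriction on the field size.

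It then remains to translate the conclusion of Theorem~\ref{thm:genwts_srk} into the present notation. There the characterizing condition is that $d_1,\ldots,d_k$ be a non-decreasing sequence in $[n]$ in which the integer $n_1+\cdots+n_{i-1}+j$ occurs at most $m_i$ times for all $i\in[t]$ and $j\in[n_i]$. For $t=1$ this reads: $1\le d_1$, $d_k\le n$, and each value $j\in[n]$ occurs at most $m$ times; since the sequence is non-decreasing, the last condition is exactly that every (maximal) constant subsequence has length at most $m$. The remaining inequality $k\le mn$ is a consequence of the other two conditions — a non-decreasing sequence with values in $[n]$, each attained at most $m$ times, has length at most $mn$ — but we keep it in the statement, as it also records the dimension bound $\dim(\C)\le mn$ coming from the Anticode Bound (Theorem~\ref{theoremab}). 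Conversely, these conditions are necessary by the ``only if'' direction of Theorem~\ref{thm:genwts_srk} (equivalently, by Lemma~\ref{wts_ambientspace} and Lemma~\ref{lemma:char_subs} applied to $\MM=\FF_q^{m\times n}$). The ``Moreover'' and ``In particular'' clauses of Theorem~\ref{thm:genwts_srk} then yield that $\C$ may be chosen to satisfy the chain condition, hence that the same sequences also arise as sequences of greedy rank weights.

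Since the whole argument is a specialization of an already proved result, I do not expect a genuine obstacle. The only points requiring care are confirming that the chain hypothesis of Theorem~\ref{thm:genwts_srk} holds unconditionally here, which is precisely the nestedness of Gabidulin codes, and the elementary bookkeeping identifying ``the integer $j$ appears at most $m$ times'' with ``every constant subsequence has length at most $m$'' and showing that $k\le mn$ is implied by the other two conditions.
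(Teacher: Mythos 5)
Your proposal is correct and matches the paper's own route: the paper obtains Theorem~\ref{thm:genwts_rk_mn} exactly as the $t=1$ specialization of Theorem~\ref{thm:genwts_srk}, with the chain hypothesis supplied unconditionally by nested Gabidulin codes~\cite{Del,gabidulin}, and the translation of the counting condition (each value appearing at most $m$ times, i.e.\ constant subsequences of length at most $m$, with $k\leq mn$ following automatically) is the same bookkeeping you describe.
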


\begin{rmk}
Fix any prime power $q$. Given a non-decreasing sequence of positive integers $d_1,\dots,d_k$, any $\C\subseteq\FF_q^{m\times n}$ which has those generalized weights must have $n\geq d_k$. Moreover, one can always make $n=d_k$, since the code that we construct in the proof of Theorem~\ref{thm:genwts_subcodeMSRD} is supported on $ d_k $ columns.
Theorem~\ref{thm:genwts_rk} motivates the question of what is the smallest $m$ for which there exists a code $\C\in\FF_q^{m\times d_k}$ such that $d_i(\C)=d_i$ for $i\in[k]$. Theorem~\ref{thm:genwts_rk_mn} implies that the least $m$ for which there exists $\C\subseteq\FF_q^{m\times n}$ with the desired sequence of generalized weights is the minimum between $n$ and the maximum length of a constant subsequence of $d_1,\dots,d_k$.
\end{rmk}

\begin{rmk}
    The analogous problem for $\FF_{q^m}$-linear rank-metric codes has been studied in~\cite{Nellen}. When $n\leq m$, it is shown in~\cite[Corollary 4.5]{Nellen} that every strictly increasing sequence of integers $d_1<\dots<d_k\leq n$ is the sequence of generalized weights of a $k$-dimensional code $\C\subseteq \FF_{q^m}^n$. The proof of this result is essentially the same as that of Theorem~\ref{thm:genwts_rk_mn}: It suffices to note that Gabidulin codes are $\FF_{q^m}$-linear. When instead $n>m$, the problem is more complex: some positive results can be found in~\cite[Section 5]{Nellen}, but at present it remains largely open.
\end{rmk}

\subsection{Existence of MSRD codes}

Our main motivation for discussing the existence of MSRD codes comes from Theorem~\ref{thm:genwts_subcodeMSRD} and Theorem~\ref{thm:genwts_srk}, where we assume the existence of a chain of MSRD codes with given parameters over a field of size $q$.
Some constructions and necessary conditions may be found in the survey~\cite{SRC} and the references therein. 

In this section, we provide two new constructions of MSRD codes. The next proposition relies on the existence of linearized Reed-Solomon codes.

\begin{prop}\label{existenceMSRD}
Let $j=\min\{i\in[t]\mid m_i=m_t\}$ and let $d=\sum_{i=1}^{j-1} n_i+1$.
Suppose that $n_i \leq m_t$ for all $i\in [t]$. 
If $q>t$, then there exist MSRD codes $\mathcal{C}_n \subseteq \ldots \subseteq \mathcal{C}_{d+1} \subseteq \mathcal{C}_d \subseteq \MM$ with $ d(\mathcal{C}_i) = i $, for $d\leq i\leq n=n_1+\cdots+n_t$.
\end{prop}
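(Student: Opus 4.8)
The plan is to exhibit the chain explicitly using linearized Reed–Solomon (LRS) codes, which are known to be MSRD for every choice of dimension compatible with the ambient space $\MM$, and to check that the natural family of LRS codes of increasing minimum distance is nested. First I would recall the construction of LRS codes: fix $q > t$, choose $a_1, \dots, a_t \in \FF_{q^{m_t}}$ lying in distinct conjugacy classes with respect to some fixed automorphism $\sigma$ of order at least $m_t$ (this is where $q > t$ is used — it guarantees enough distinct classes, equivalently enough distinct "blocks"), and for each block $i$ pick $\FF_q$-linearly independent elements $\beta_{i,1}, \dots, \beta_{i,n_i} \in \FF_{q^{m_t}}$; the hypothesis $n_i \le m_t$ ensures this is possible. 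The LRS code of order $s$ is then the image of the $\FF_{q^{m_t}}$-space of skew polynomials of degree $< s$ under evaluation at the $\beta_{i,\ell}$ twisted by the $a_i$. Viewing each block as a matrix over $\FF_q$ via the basis expansion, one obtains a code in $\FF_q^{m_t \times n_i}$ in each block, hence in $\prod_i \FF_q^{m_t \times n_i}$; since $m_1 \ge \dots \ge m_t$, one can pad each block to height $m_i$ (e.g. by zero rows, or better by using an $m_i$-dimensional $\FF_q$-space containing the image) to land inside $\MM$ without changing sum-rank weights.

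Next I would identify the parameters. The LRS code of order $s$ has $\FF_q$-dimension $s \cdot m_t$ and, by the LRS Singleton-type theorem, is MSRD; one then matches $s$ with the minimum distance. Writing $s = \sum_{i=j}^t n_i - \delta$ for the appropriate $j, \delta$ as in the definition of MSRD codes forces $d(\mathcal{C}) = \sum_{i=1}^{j-1} n_i + \delta + 1$; because $m_j = \dots = m_t = m_t$ for $j = \min\{i : m_i = m_t\}$, the dimension formula $\dim = \sum_{i=j}^t m_i n_i - \delta m_j = m_t(\sum_{i=j}^t n_i - \delta) = m_t \cdot s$ is consistent, so LRS codes realize exactly the minimum distances $d, d+1, \dots, n$ as $s$ ranges over $n - d + 1$ down to $1$ — note that minimum distances below $d = \sum_{i=1}^{j-1} n_i + 1$ are not achievable by full-height codes, which is why the chain starts at $d$.

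Then I would check nestedness: $\mathcal{C}_{i+1} \subseteq \mathcal{C}_i$. This is the key structural point and should be immediate from the polynomial description — decreasing the minimum distance by one corresponds to increasing the skew-polynomial degree bound $s$ by one, and the space of skew polynomials of degree $< s$ is contained in that of degree $< s+1$, so the evaluation images are nested. (This is the exact analogue of classical Reed–Solomon codes being nested, and it is here that one must be careful not to use an "evaluation at infinity"-type variant, in the spirit of Example~\ref{ex:evaatinfty}; the standard LRS construction has no such issue.) Finally, reindexing so that $\mathcal{C}_i$ has minimum distance $i$ for $d \le i \le n$, and observing $\mathcal{C}_d$ has dimension $m_t(n - d + 1) = \sum_{i=j}^t m_i n_i$, which is the full dimension allowed for an MSRD code of minimum distance $d$, gives the claimed chain $\mathcal{C}_n \subseteq \dots \subseteq \mathcal{C}_{d+1} \subseteq \mathcal{C}_d \subseteq \MM$.

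The main obstacle I anticipate is purely bookkeeping rather than conceptual: verifying that the padding from height $m_t$ up to the various $m_i$ can be done coherently across the whole chain (so that the inclusions persist after padding) and that it does not disturb the sum-rank weights or the MSRD property — one wants the same fixed embeddings $\FF_q^{m_t \times n_i} \hookrightarrow \FF_q^{m_i \times n_i}$ used for every code in the chain. Everything else reduces to citing the known parameters and MSRD property of linearized Reed–Solomon codes and the trivial nestedness of truncated skew-polynomial spaces.
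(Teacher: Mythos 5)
Your proposal is correct and follows essentially the same route as the paper: the paper likewise takes a nested chain of linearized Reed--Solomon MSRD codes of uniform height $m_t$ (citing the linearized Reed--Solomon literature for their existence, parameters, and nestedness, which you instead verify directly via the skew-polynomial degree filtration) and then pads each block with $m_i-m_t$ zero rows, noting this changes neither the sum-rank weights nor the MSRD property. The padding concern you raise is handled in the paper exactly as you suggest, by one fixed zero-row embedding applied uniformly to the whole chain.
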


\begin{proof}
Since $ q > t $ and $ n_i \leq m_j $ for all $ i \in [t] $, then by~\cite[Definition~31 and Theorem~4]{linearizedRS} there exists a chain of linearized Reed-Solomon codes $\mathcal{D}_n\subseteq\ldots\subseteq\mathcal{D}_{d+1}\subseteq\mathcal{D}_d\subseteq\FF_q^{m_j \times n_1}\times \cdots\times\FF_q^{m_j \times n_t}$. The thesis follows from observing that increasing $m_j$ in the positions $i $ with $i<j$ by adding $m_i-m_j$ rows of zeros to each matrix does not affect the property of being MSRD. 
\end{proof}

The next result complements Proposition~\ref{existenceMSRD}, under the assumption that one of the $m_i$'s is large enough.

\begin{prop}\label{existenceMSRDII}
Let $j=\min\{i\in[t]\mid m_i=m_t\}$ and let $d=\sum_{i=1}^{j-1} n_i + \delta + 1 $ for some $0\leq\delta\leq n_j-1$. Let $h=\max\{i\in[t]\mid n_i>m_t\}$ and assume that $h\in[t]$ and $m_h\geq m_t\left(\sum_{i=j}^{t}n_i-\delta\right)$. If $q>t-h$, then there exist MSRD codes $\mathcal{C}_n\subseteq\ldots\subseteq\mathcal{C}_{d+1}\subseteq\mathcal{C}_d\subseteq\MM$ with $d(\mathcal{C}_i)=i$, for $d\leq i\leq n=n_1+\cdots+n_t$.
\end{prop}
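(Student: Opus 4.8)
The plan is to reduce to the linearized Reed--Solomon construction of Proposition~\ref{existenceMSRD}, using the abundance of rows $m_h$ to attach the blocks $1,\ldots,h$ — the only ones that may violate $n_i\le m_t$ — as a rigid, constant-sum-rank summand. Set $N_1=n_1+\cdots+n_h$, $N_2=n_{h+1}+\cdots+n_t$ (so $n=N_1+N_2$) and $K=m_t(\sum_{i=j}^t n_i-\delta)$, which by the Singleton Bound is the dimension forced on any MSRD code of minimum distance $d$. Two elementary facts drive the argument. First, $h\le j-1$: for $l\ge j$ we have $m_l=m_t\ge n_l$, so no block of index $\ge j$ has $n_l>m_t$; consequently $m_l\ge m_h\ge K$ for every $l\in[h]$, while every block of index $>h$ has $n_l\le m_t$. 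Second, $d=\sum_{i=1}^{j-1}n_i+\delta+1\ge N_1+1$, so every code in the desired chain has minimum distance strictly larger than the total length $N_1$ of its first $h$ blocks.

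The construction combines three ingredients. (i) Since the $t-h$ blocks $h+1,\ldots,t$ satisfy $n_l\le m_t$, by \cite[Definition~31 and Theorem~4]{linearizedRS} (using $q>t-h$, exactly as in the proof of Proposition~\ref{existenceMSRD}) there is a chain of nested linearized Reed--Solomon codes $\mathcal{L}_{N_2}\subseteq\cdots\subseteq\mathcal{L}_1$ in $\FF_{q^{m_t}}^{n_{h+1}}\times\cdots\times\FF_{q^{m_t}}^{n_t}\cong\FF_q^{m_t\times n_{h+1}}\times\cdots\times\FF_q^{m_t\times n_t}$ with $d(\mathcal{L}_a)=a$ and $\dim_{\FF_q}\mathcal{L}_a=m_t(N_2-a+1)$; padding block $l$ with $m_l-m_t$ zero rows for $h<l<j$ (nothing to add when $l\ge j$) embeds this chain into $\FF_q^{m_{h+1}\times n_{h+1}}\times\cdots\times\FF_q^{m_t\times n_t}$ without changing sum-rank weights. (ii) For each $l\in[h]$, since $n_l\le m_l$, a Gabidulin code $\mathcal{G}_l\subseteq\FF_q^{m_l\times n_l}$ of $\FF_{q^{m_l}}$-dimension one has minimum rank distance $n_l$, so all its nonzero codewords have rank exactly $n_l$; as $\dim_{\FF_q}\mathcal{G}_l=m_l\ge K$, we may fix an $\FF_q$-subspace of dimension $K$ inside it and an $\FF_q$-linear isomorphism $\beta^{(l)}$ of $\FF_q^{K}$ onto it. Then $\beta=(\beta^{(1)},\ldots,\beta^{(h)})\colon\FF_q^{K}\hookrightarrow\FF_q^{m_1\times n_1}\times\cdots\times\FF_q^{m_h\times n_h}$ is injective and satisfies $\srk(\beta(x))=N_1$ for all $x\ne0$. (iii) For $d\le i\le n$ put $a_i=i-N_1$, which lies in $[\,d-N_1,\,N_2\,]\subseteq[1,N_2]$, and choose nested subspaces $V_d\supseteq V_{d+1}\supseteq\cdots\supseteq V_n$ of $\FF_q^{K}$ with $\FF_q$-linear isomorphisms $\alpha_i\colon V_i\to\mathcal{L}_{a_i}$ compatible with both chains: take $V_d=\FF_q^{K}$ and $\alpha_d$ any isomorphism onto $\mathcal{L}_{a_d}$ (dimensions agree since $m_t(N_2-a_d+1)=m_t(n-d+1)=K$), then inductively $V_{i+1}=\alpha_i^{-1}(\mathcal{L}_{a_i+1})$ and $\alpha_{i+1}=\alpha_i|_{V_{i+1}}$. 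Finally set $\mathcal{C}_i=\{(\beta(x),\alpha_i(x)):x\in V_i\}\subseteq\MM$.

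It then remains to verify the required properties. Nestedness $\mathcal{C}_n\subseteq\cdots\subseteq\mathcal{C}_d\subseteq\MM$ is immediate from $V_{i+1}\subseteq V_i$ and $\alpha_{i+1}=\alpha_i|_{V_{i+1}}$; injectivity of $\beta$ gives $\dim\mathcal{C}_i=\dim V_i=m_t(n-i+1)$; for $0\ne x\in V_i$ one has $\srk(\beta(x),\alpha_i(x))=N_1+\srk(\alpha_i(x))\ge N_1+d(\mathcal{L}_{a_i})=N_1+a_i=i$, with equality attained by a minimum-weight codeword of $\mathcal{L}_{a_i}$, so $d(\mathcal{C}_i)=i$; and writing $i=\sum_{l=1}^{j_i-1}n_l+\delta_i+1$ with $0\le\delta_i\le n_{j_i}-1$, the inequality $i\ge d$ forces $j_i\ge j$, hence $m_l=m_t$ for $l\ge j_i$, and the MSRD dimension at minimum distance $i$ equals $m_t(\sum_{l=j_i}^t n_l-\delta_i)=m_t(n-i+1)=\dim\mathcal{C}_i$, so $\mathcal{C}_i$ is MSRD. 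The substantive point is the recognition that $m_h\ge K$ is precisely enough room for the constant-rank summand $\beta$, while $d>N_1$ keeps the tail component faithful; I do not expect a genuine obstacle beyond the routine bookkeeping that $a_i\in[1,N_2]$, that $j_i\ge j$, and that zero-row padding and the $\FF_{q^{m_t}}/\FF_q$ matrix expansion preserve sum-rank weights.
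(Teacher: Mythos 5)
Your proposal is correct and follows essentially the same route as the paper: the paper also splits $\MM$ into the first $h$ blocks, where it builds a constant-weight code of dimension $m_h\geq K$ from MRD codes all of whose nonzero codewords have full rank $n_\ell$, and the last $t-h$ blocks, where it takes the linearized Reed--Solomon MSRD chain (via Proposition~\ref{existenceMSRD}), gluing the two by pairing basis vectors exactly as your graph-of-a-map construction does. Your formulation via the isomorphisms $\alpha_i$ restricted along the chain just makes the nestedness of the $\mathcal{C}_i$ slightly more explicit than the paper's choice of compatible bases of the $\mathcal{D}_i$.
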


\begin{proof}
Since $n_j\leq m_j$, then $h<j$. By Proposition~\ref{existenceMSRD} there exists a chain of MSRD codes $ \mathcal{D}_n \subseteq \ldots \subseteq \mathcal{D}_{d+1} \subseteq \mathcal{D}_d \subseteq \FF_q^{m_{h+1} \times n_{h+1}} \times \cdots \times \FF_q^{m_t \times n_t}$ with $ d(\mathcal{D}_i) = i - \sum_{\ell=1}^{h} n_\ell $ for $d\leq i\leq n$. 
Moreover, we claim that there exists a code $ \mathcal{D}_0 \subseteq \FF_q^{m_1 \times n_1} \times \cdots \times \FF_q^{m_h \times n_h}$ with minimum distance $\sum_{i=1}^{h} n_i$ and dimension $m_h$. In fact, for all $\ell\in[h]$ there exists an MRD code $\mathcal{U}_\ell\subseteq\FF_q^{m_\ell\times n_\ell}$ with minimum distance $n_\ell$ and dimension $m_\ell$. For any fixed $\ell\in[h]$, let $U_{i\ell}\in\mathcal{U}_\ell$ be linearly independent matrices with $i\in[m_h]$. Then $N_i=(U_{i1},\ldots,U_{i,h})$, $i\in[m_h]$, are a basis of a code $\D_0$ with the required properties. Let $ M_{i,1}, \ldots, M_{i,D_i} $ be a basis of $ \mathcal{D}_i $, where $ D_i = \dim(\mathcal{D}_i) $ and $d\leq i\leq n$. Finally, let $ \mathcal{C}_i \subseteq \MM$ be the code with basis $ (N_1,M_{i,1}),\ldots,(N_{D_i},M_{i,D_i}) $. 
The construction works, since $D_n\leq\ldots\leq D_d=\sum_{\ell=j}^{t}m_\ell n_\ell-m_j\delta\leq m_h$. 
The code $\C_i$ has minimum distance $i$ and dimension $ D_i $, hence it is MSRD, for $d\leq i\leq n$. By construction we have $ \mathcal{C}_n \subseteq \ldots \subseteq \mathcal{C}_{d+1} \subseteq \mathcal{C}_d\subseteq\MM$.
\end{proof}

More is known if we restrict to linear block codes and the Hamming metric. It is well known that MDS codes exist whenever $q\geq n-1$. 
The MDS Conjecture states that, if $2\leq k\leq q-1$, this sufficient condition is also necessary. The MDS Conjecture was proven in~\cite{Ball2012} by Ball in several situations, including the case when $q$ is prime. See also~\cite{MDSsurvey} for a more recent survey. The problem of characterizing the parameter sets for which MDS or MSRD codes exist is a highly nontrivial one and remains open in general.

\bibliographystyle{plain}

\end{document}